\DeclareMathOperator*{\argmax}{argmax}
\newcommand{\tabincell}[2]{\begin{tabular}{@{}#1@{}}#2\end{tabular}}
\newcommand{\set}[1]{\{#1\}}
\newtheorem{definition}{Definition}
\newtheorem{theorem}{Theorem}
\newtheorem{remark}[theorem]{Remark}
\newcommand{\archname}{{Split-AI}\xspace}
\newcommand{\sysname}{{SELENA}\xspace}
\begin{document}
\date{}

\title{\Large \bf Mitigating Membership Inference Attacks by Self-Distillation\\  Through a Novel  Ensemble Architecture}

 \author{
 {\rm Xinyu Tang}\\
 Princeton University
 \and
 {\rm Saeed Mahloujifar}\\
 Princeton University
  \and
  {\rm Liwei Song}\\
 Princeton University
 \and
 {\rm Virat Shejwalkar}\\
 University of Massachusetts Amherst
 \and
 {\rm Milad Nasr}\\
 University of Massachusetts Amherst
 \and
 {\rm Amir Houmansadr}\\
 University of Massachusetts Amherst
 \and
 {\rm Prateek Mittal}\\
 Princeton University
 }

\maketitle

\begin{abstract}
Membership inference attacks are a key measure to evaluate privacy leakage in machine learning (ML) models. These attacks aim to distinguish training members from non-members by exploiting differential behavior of the models on member and non-member inputs.
The goal of this work is to train ML models that have high  membership privacy while  largely preserving their utility; we therefore aim for an empirical membership privacy guarantee as opposed to the  provable privacy guarantees provided by techniques like differential privacy, as such techniques are shown to deteriorate model utility. Specifically, we propose a new framework to train privacy-preserving models that induces similar behavior on member and non-member inputs to mitigate  membership inference attacks.
Our framework, called \sysname, has two major components. 
The first component and the core of our defense is a novel ensemble architecture for training. This architecture, which we call \archname,  
splits the training data into random subsets, and trains a model on each subset of the data. We use an adaptive inference strategy at test time: our ensemble architecture aggregates the outputs of only those models that did not contain the input sample in their training data. We prove that our \archname architecture defends against a large family of membership inference attacks, however, it is susceptible to new adaptive attacks. 
Therefore, we use a second component in our framework called Self-Distillation 
to protect against such stronger attacks. 
The Self-Distillation component (self-)distills the training dataset through our \archname ensemble, without using any external public datasets. 
Through extensive experiments  on major benchmark datasets we show that \sysname  presents a superior trade-off between membership privacy and utility compared to the state of the art. 
In particular, \sysname incurs no more than 3.9\% drop in classification accuracy compared to the undefended model. Compared to two state-of-the-art empirical defenses to membership privacy, MemGuard and adversarial regularization, \sysname reduces the membership inference attack advantage over a random guess by a factor of up to 3.7 compared to MemGuard and a factor of up to 2.1 compared to adversarial regularization.

\end{abstract}


\section{Introduction}
\label{sec:intro}

Machine learning has achieved tremendous success in many areas, but it requires access to data that may be sensitive.  Recent work has shown that machine learning models are prone to memorizing sensitive information of training data incurring serious privacy risks~\cite{shokri2017membership, carlini2019secret, carlini2020extracting, fredrikson2015model, salem2020updates,song2017machine,ganju2018property}. Even if the model provider is trusted and only provides query services via an API, i.e., black-box model access in which only prediction vectors are available, private information can still be obtained by attackers. The \emph{membership inference attack (MIA)} is one such threat in which an adversary tries to identify whether a target sample was used to train the target machine learning model or not based on model behavior\cite{shokri2017membership}.
MIAs pose a severe 
privacy threat by revealing private information about the training data. For example, knowing the victim's presence in the hospital health analytic training set reveals that the victim was once a patient in the hospital. 

Shokri et al.~\cite{shokri2017membership} conducted MIAs against machine learning in the black-box manner. They formalize the attack as a binary classification task and utilize a neural network~(NN) model along with shadow training technique to distinguish members of training set from non-members. Following this work, many MIAs have been proposed which can be divided into two categories: direct attacks~\cite{yeom2020overfitting, song2020systematic, yeom2018privacy, song2019privacy,nasr2018machine,nasr2019comprehensive}, which directly query the target sample and typically utilize only a single query; indirect attacks~\cite{long2018understanding, li2020label, choo2020label}, which query for samples that are in the neighborhood of the  target sample to infer membership, and typically utilize multiple queries. The research community has further extended the MIA to federated settings~\cite{melis2019exploiting, nasr2019comprehensive} and generative models~\cite{hayes2019logan}. MIAs have also provided a foundation for more advanced data extraction attacks~\cite{carlini2020extracting} 
and for benchmarking privacy-preserving mechanisms~\cite{jayaraman2019evaluating,nasr2021adversary}.

The effectiveness of MIAs and the resulting privacy threat has  motivated the research community to design several defense mechanisms against these attacks~\cite{abadi2016deep, nasr2018machine, jia2019memguard, shejwalkar2019reconciling}. As MIAs distinguish members and non-members of the target model based on the difference in model's behavior on members, defense mechanisms need to enforce similar model behavior on members and non-members. 
There exist two main categories of membership inference defenses, as shown in  Table~\ref{tab:prov_empi}: techniques that offer \emph{provable privacy}, and defenses that offer \emph{empirical membership privacy}. The first category mainly uses differential privacy mechanisms~\cite{abadi2016deep,mcmahan2017learning, wang2019subsampled} to be able to provide a \emph{provable privacy guarantee} for all inputs. However, the use of DP (e.g., in DP-SGD~\cite{abadi2016deep}) is shown to significantly reduce the utility of the underlying models in many machine learning tasks~(see Section~\ref{subsec:dpsgd}).
This has motivated the second category of membership inference defenses, where privacy is empirically evaluated through practical MIAs with the aim of preserving model utility. 
Our work in this paper falls in the second category, and as we will show, our technique offers a superior trade-off between MIA protection and model utility compared to the state-of-the-art empirical privacy defenses~\cite{nasr2018machine,jia2019memguard,shejwalkar2019reconciling} 
 (see Section~\ref{sec: eval} for more details).

\begin{table}[ht]
   \aboverulesep=0ex 
   \belowrulesep=0ex 
    \centering
    \begin{tabular}{c|c|c}
    \toprule
         &Low utility &High utility \\
         \midrule
         \rule{0pt}{1.1EM}
         \tabincell{c}{Provable\\ privacy}& \tabincell{c}{ DP-based:\\DP-SGD~\cite{abadi2016deep}} &\tabincell{c}{Desired~(No method\\ achieves this goal so far)}\\
         \midrule
        \tabincell{c}{Empirical \\membership \\privacy} &\tabincell{c}{Not be \\considered} &\tabincell{c}{ Adversarial \\Regularization~\cite{nasr2018machine},\\MemGuard~\cite{jia2019memguard},\\SELENA(Our work)}\\
         \bottomrule
         
    \end{tabular}
    \caption{Two categories of membership inference defenses: provable privacy with low utility vs. empirical membership privacy with high utility.}
    \label{tab:prov_empi}
\end{table}

\noindent\textbf{Our Framework.}
In this paper, we introduce a novel empirical  MIA defense framework, called \sysname,\footnote{SELf ENsemble Architecture.} whose goal is to protect against practical black-box MIAs while also achieving high classification accuracy. Our framework consists of two core components: \emph{\archname} and \emph{Self-Distillation}.

\textbf{Split Adaptive Inference Ensemble (\archname):}
Our first component, called \archname, is proposed to enable the model to have similar behavior on members and non-members. We obtain this goal by training multiple models~(called sub-models) with random subsets from the training set. 
While such ensemble architectures have been considered in different ML contexts, our framework's novelty lies in the particular adaptive approach it uses to respond to the queries. 
The key intuition is that for a training sample, if one of sub-models is not trained with it, this sub-model will have similar behavior on this training sample and other non-members. We use this intuition in our adaptive inference strategy. When the queried sample is in the training set, the adaptive inference procedure will only call the sub-models that did not use the query in their training set. When the queried sample is not in the training set, we query a particular subset of sub-models (as explained later in Section \ref{sec: ourdefense}).
Our approach provides an intuitive foundation for membership privacy: no matter if the queried sample is a member or a non-member, the adaptive inference will always use only those sub-models which have not used that sample for their training; this ensures membership privacy which we demonstrate through a formal analysis.

\textbf{Self-Distillation:} 
Our \archname shows promising performance against the traditional type of MIA, i.e., the direct single-query attack~\cite{song2019privacy, song2020systematic,yeom2020overfitting, yeom2018privacy}. However, it falls short in protecting against recent adaptive MIAs, which work by crafting multiple, particularly-fabricated queries~\cite{li2020label, choo2020label}. Moreover, \archname has a high computational overhead, as it needs to search for each queried sample within the training set,  and perform inference on multiple sub-models. 
To protect \archname against adaptive attacks and to reduce its computational overhead, we use the second component of our framework, which we call Self-Distillation.  Our Self-Distillation component  performs a novel form of knowledge transfer on the model created by \archname to produce a final \emph{protected model}. Specifically, it first  queries \archname with its exact training samples to get their corresponding prediction vectors. 
Then, it uses these prediction vectors as the soft labels of the training set to train the protected model, which will be used for inference. 
During the inference stage, the protected model only need to perform a single computation for each queried sample, therefore it has a much lower overhead compared to \archname's model.
Furthermore, the protected model protects not only against traditional single-query MIA attacks, but also against adaptive MIA attacks as shown in our analysis in Section~\ref{sec: eval}.  Note that, unlike conventional uses of distillation for membership privacy~\cite{shejwalkar2019reconciling}, our Self-Distillation component does not need a public dataset for knowledge transfer as it uses its own training dataset for (self-)distillation.\footnote{Note that our usage of the term self distillation is different from what Zhang et al.\cite{zhang2019your} refer to as self-distillation.}

\noindent\textbf{Evaluation.} We evaluate our \sysname on three benchmark datasets (CIFAR100, Purchase100, Texas100) and compare with existing defenses~\cite{nasr2018machine, jia2019memguard, song2020systematic} in a rigorous manner using two types of existing attacks and one type of adaptive attack. (1) We first analyze our defense by \emph{direct single-query attacks}, which have been typical used in most previous MI attacks and defenses. (2) We next evaluate our framework  by \emph{label-only attacks}, which infer membership information only based on labels and hence simply obfuscating prediction confidence vector can not protect against such attacks. (3) We finally study \emph{adaptive attacks}, which are tailored to our defense mechanism. Overall, \sysname achieves a better trade-off between the utility, i.e., classification accuracy, and the practical membership privacy without requiring additional public data. For utility,  \sysname incurs only a little drop in classification accuracy compared to the  undefended model~(no more than 3.9\%), and outperforms adversarial regularization~\cite{nasr2018machine} by up to 7.0\%~(on Texas100). For membership privacy risks, \sysname reduces the MIA advantage over a random guess by a factor of up to 4.0 compared to undefended model, a factor of up to 3.7 compared to MemGuard~\cite{jia2019memguard} and a factor of up to 2.1 compared to adversarial regularization~\cite{nasr2018machine}. 
Unlike DP-SGD that offers a provable privacy guarantee, our approach only provides an empirical membership  inference defense (similar to MemGuard and adversarial regularization). However, our evaluation shows that \sysname achieves a much better utility than DP-SGD~(See Figure~\ref{fig:compare}).
\begin{figure}
    \centering
    \includegraphics[width=3.0in]{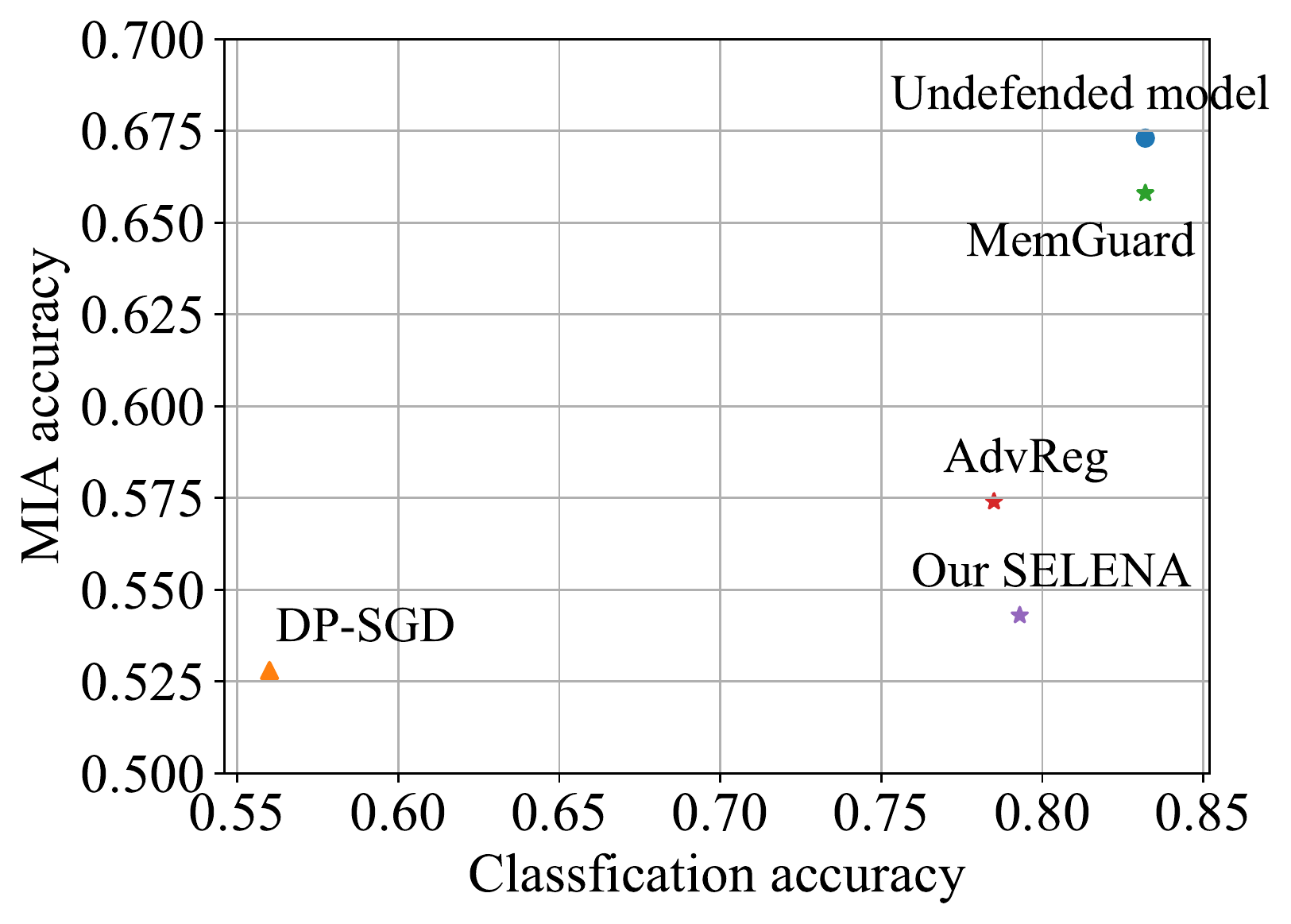}
    \caption{Comparison of our method with undefended model, DP-SGD~\cite{abadi2016deep}~($\epsilon=4$), MemGuard~\cite{jia2019memguard} and adversarial regularization~\cite{nasr2018machine} with respect to classification accuracy and MIA accuracy on Purchase100 dataset. Our SELENA outperforms adversarial regularization in both classification and MIA accuracy. Our SELENA significantly reduces MIA accuracy compared to undefended model and MemGuard while incurring little classification accuracy drop. Our SELENA achieves much higher classification accuracy compared to DP-SGD while only incurs little additional practical membership risks.}
    \label{fig:compare}
\end{figure}

In summary, we propose a membership inference defense to achieve high classification accuracy and highly mitigate practical MIAs. Our key contributions are as follows:
\begin{compactitem}
    \item We propose  \archname as the first component of our framework that enforces the model to have a similar behavior on members and non-members while maintaining a good classification accuracy using sub-models trained on overlapping subsets of data and an adaptive inference strategy. We further prove that the direct single-query attack can not achieve higher attack accuracy than a random guess against this component.
    \item We introduce Self-Distillation of the training set as the second component of our framework to overcome the limitations of the \archname while largely preserving its defense abilities without relying on an additional public dataset.
    \item We systematically evaluate our framework on three benchmark datasets including Purchase100, Texas100, and CIFAR100 against a suite of MIAs including direct single-query attacks, label-only (indirect multi-query) attacks and adaptive attacks to show that our framework outperforms prior defenses.

\end{compactitem}

\section{Preliminaries and Problem Formulation}
\label{sec:prelim} 

In this section, we introduce the machine learning concepts and notation relevant to our work, as well as our threat model and design goals.

\subsection{ML Preliminaries and Notation}
\label{pre}
    In this paper, we consider supervised machine learning for classification. Let  $F_{\theta}:\mathbb{R}^d \mapsto \mathbb{R}^k$ be a classification model with $d$ input features and $k$ classes, which is parameterized by $\theta$. For a given example $\textbf{z} = (\textbf{x}, y)$, $F_{\theta}(\textbf{x})$ is the classifier's confidence vector for $k$ classes and the predicted label is the corresponding class which has the largest confidence score, i.e., $\hat{y} =  \mathop{\argmax}_i F_{\theta}(\textbf{x})$.
    
    The goal of supervised machine learning is to learn the relationship between training data and labels and generalize this ability to unseen data. The model learns this relationship by minimizing the predicted loss across the training set $D_{tr}$:
    $$\min_{\theta}\frac{1}{|D_{tr}|}\sum_{\textbf{z}\in D_{tr}}l(F_{\theta}, \textbf{z})$$
    Here $|D_{tr}|$ is the size of the training set and $l(F_{\theta}, \textbf{z})$ is the loss function. 
    When clear from the context, we use $F$, instead of $F_\theta$, to denote the target model.

\subsection{Threat Model}
\textbf{Black-box attack:} In this paper, we follow previous defenses~\cite{nasr2018machine, jia2019memguard} and assume the attacker has black-box access to the target model, i.e., the attacker can only make queries to the model provider and obtain corresponding prediction vectors or predicted labels, 
instead of having access to target model's parameters. Therefore, the adversary can perform standard black-box attacks, in particular the \emph{direct single-query} attacks, which directly query the target sample \emph{one time} and is the typical benchmarking technique, and the \emph{label-only attacks}, which \emph{make multiple queries} for a single target sample and exploit predicted label information. 
{We also introduce a third type of black-box attack which is an adaptive attack tailored to our system. See Section~\ref{sec:attackdefense} for a detailed explanation of direct single-query attacks and label-only attacks and Section~\ref{sec:adaptiveattack} for the adaptive attacks.}

\textbf{Partial knowledge of membership for members:} 
Like previous defenses~\cite{nasr2018machine,song2020systematic}, we assume the adversary knows a small ratio of samples from the training set, i.e., it knows some members. The goal of the adversary is to identify any other member sample.

\subsection{Design Goals}
\label{subsec:goal}

In this paper, we aim to overcome the limitations of existing membership inference defenses~\cite{nasr2018machine, jia2019memguard, shejwalkar2019reconciling}, which estimate the membership risk through practical MIAs: none of these defenses are able to provide sufficient MIA protection and high utility simultaneously in the absence of public datasets.

\textbf{Low MIA accuracy:} Our goal is to design practical defenses against MIAs. We will evaluate our defense in a systematic and rigorous manner to ensure that it achieves low MIA accuracy (i.e., high membership privacy) across a broad class of attacks, instead of only one specific family of attacks.

\textbf{High classification accuracy:} We aim to protect membership privacy without significantly decreasing the classification accuracy (model utility).

\textbf{No additional public data required for defense:} Some prior works~\cite{papernot2016semi, shejwalkar2019reconciling} have proposed to preserve membership privacy by knowledge distillation using  publicly available datasets. However, this is a limiting assumption since public datasets may not be available in many real-world ML training scenarios such as healthcare data. In this paper, we consider a more realistic scenario, where the model provider does not have access to external public dataset.

\section{Existing Attacks and Defenses}
\label{sec:attackdefense}
Next, we overview prior MI attacks and MI defenses. 

\subsection{Membership Inference Attacks (MIAs)}
\label{subsec: attack}
    MIAs can utilize the prediction vector as a feature using a neural-network-based model, called \emph{NN-based attacks}, or can compute a range of custom metrics (such as correctness, confidence, entropy) over the prediction vector to infer membership, called \emph{metric-based attacks}. These attacks can be mounted either by knowing a subset of the training set~\cite{nasr2018machine} or by knowing a dataset from the same distribution of the training set and constructing shadow models~\cite{shokri2017membership}.

    Let us denote $D_{tr}$ as the training set for the target model, i.e., members and $D_{te}$ as the test set, i.e., non-members. $D_{tr}^A$ and $D_{te}^A$ are, respectively, the sets of members and non-members that the attacker knows. $I(\textbf{x}, y, F(\textbf{x}))$ is the binary membership inference classifier in the range of $\{0, 1\}$ which codes members as 1, and non-members as 0. The literature typically measures MIA efficacy as the attack accuracy:
     $$\frac{\sum_{(\textbf{x},y)\in D_{tr}\backslash D_{tr}^A}I(\textbf{x}, y , F(\textbf{x}))+\sum_{(\textbf{x},y)\in D_{te}\backslash D_{te}^A}(1-I(\textbf{x}, y , F(\textbf{x})))}{|D_{tr}\backslash D_{tr}^A| + |D_{te}\backslash D_{te}^A|}$$
    
    In most previous attacks~\cite{shokri2017membership, nasr2018machine, yeom2020overfitting, song2020systematic}, the number of members and non-members used to train and evaluate the attack model are the same. With this approach, the prior probability of a sample being either a member or a  non-member is 50\% (corresponding to a random guess).
    
    Next, we summarize  black-box MIAs in the following two categories: \textbf{direct} attacks and \textbf{indirect} attacks.
    
    \textbf{Direct single-query attacks:} Most existing MIAs directly query the target sample and utilize the resulting prediction vector. Since ML models typically have only one output for each queried sample, just a single query is sufficient.

    \emph{ NN-based attack~\cite{shokri2017membership, nasr2018machine}:} 
    The attacker can use the prediction vectors from the target model along with the one-hot encoded ground truth labels as inputs and build a NN model~\cite{nasr2018machine} $I_{\text{NN}}$ for the membership inference task.  
    
    \emph{Correctness-based attack~\cite{yeom2020overfitting}:} Generalization gap~(i.e., the difference between training accuracy and test accuracy) is a simple baseline for MIA as samples with correct prediction are more likely to be training members. 
    $$I_{\text{corr}}(F(\textbf{x}),y) = \mathds{1}\{\mathop{\argmax}_i F(\textbf{x})_i =y\}$$

    \emph{Confidence-based attack~\cite{yeom2018privacy, song2019privacy, song2020systematic}:} Prediction confidence corresponding to training samples $F(\textbf{x})_{{y}}$ is typically higher than prediction confidence for testing samples. Therefore, confidence-based attack will only regard the queried sample as a member when the prediction confidence is larger than either a class-dependent threshold $\tau_y$ or a class-independent threshold $\tau$.
    $$I_{\text{conf}}(F(\textbf{x}),y)= \mathds{1}\{ F(\textbf{x})_{y} \geq {\tau_{(y)}}\}$$

    \emph{Entropy-based attack~\cite{shokri2017membership, song2020systematic}}: The prediction entropy of a training sample is typically lower than the prediction entropy of a testing sample.
    Therefore, entropy-based attack will only regard the queried sample as a member when the prediction entropy is lower than a class-dependent threshold $\tau_y$ or a class-independent threshold $\tau$.

    $$I_{\text{entr}}(F(\textbf{x}),y) = \mathds{1}\{-\sum_i F(\textbf{x})_i \log(F(\textbf{x})_i) \leq \tau_{(y)}\}$$    
    
    \emph{Modified entropy-based attack~\cite{song2020systematic}:}
    Song et al.~\cite{song2020systematic} proposed the modified prediction entropy metric which combines the information in the entropy metric and ground truth labels.:
    \begin{equation*}
    \begin{split}
        \text{Mentr}(F(\textbf{x}),y) &=  -(1-F(\textbf{x})_y)\log (F(\textbf{x})_y)\\ &- \sum_{i\neq y} F(\textbf{x})_i \log(1-F(\textbf{x})_i)
    \end{split}
    \end{equation*}
    Training samples typically have lower values of modified entropy metric than testing samples and either a class-dependent threshold $\tau_y$ or a class-independent threshold $\tau$ attack is applied to infer membership:
    $$I_{\text{Mentr}}(F(\textbf{x}),y) = \mathds{1}\{\text{Mentr}(F(\textbf{x}),y) \leq \tau_{(y)}\}$$

    \textbf{Indirect multi-query attacks~(label-only attacks):}
    Long et al.~\cite{long2018understanding} stated that indirect attacks can make queries that are related to target sample \textbf{x} to extract additional membership information as a training sample influences the model prediction both on itself and other samples in its neighborhood. These indirect attacks usually make multiple queries for a single target sample~\cite{long2018understanding,li2020label, choo2020label}. For example, multi-query \emph{label-only attacks} leverage the predicted label of the queried data as features, and are thus immune to defenses that only obfuscate prediction confidences, e.g., MemGuard~\cite{jia2019memguard}. The key idea in label-only attacks is that the model should be more likely to correctly classify the samples around the training data than the samples around test data, i.e., members are more likely to exhibit high robustness than non-members~\cite{li2020label, choo2020label}. Simply obfuscating a model’s confidence scores can not hide label information to defend against such label-only attacks. 

    \emph{Boundary estimation attacks~\cite{li2020label,choo2020label}:} As the target model is more likely to correctly classify the samples around training samples than those around test samples, the distance to classification boundary for training sample should be larger than that for the test samples. An attacker can either leverage techniques for finding adversarial examples under the black-box assumption~\cite{chen2020hopskipjumpattack,brendel2017decision} or add noise to find the adversarial examples that change the predicted label with minimum perturbation. Such attacks should not achieve higher attack accuracy than the white-box adversarial examples attack such as Carlini-Wagner attack~\cite{carlini2017towards}, which has full access to the model parameters and can find the adversarial example with the least distance for each target sample.

    \emph{Data augmentation attacks~\cite{choo2020label}:}
    In computer vision tasks, data augmentation techniques based on translation, rotation, and flipping help improve test accuracy. However, such data augmentation techniques pose a  privacy threat: the target model is more likely to correctly classify the augmented data of training samples. An attacker can query the augmented data of the target record and use the percentage of correct predictions to identify membership of the target record.

\subsection{Existing Defenses}
\label{subsec:defense}
    Multiple defenses have been proposed to mitigate MIAs. We summarize them below. Section~\ref{sec:related} gives a more comprehensive summary of prior defenses.
    
    \textbf{Adversarial Regularization~\cite{nasr2018machine}:} Nasr et al.~\cite{nasr2018machine} include the estimation of membership threat in the training process of the ML model. They optimize a min-max game to train a privacy-preserving target classifier, which aims to reduce the prediction loss while also minimizing the MIA accuracy.

    \textbf{Early Stopping ~\cite{caruana2001overfitting,song2020systematic}:} During the training process, the model may learn too much information in the training samples thus the difference between its behavior on members and non-members becomes larger and larger, and the model becomes more vulnerable to MIAs. Therefore, early stopping, which is a general technique to prevent model overfitting by stopping model training before the whole training process ends, can mitigate MIA accuracy with a sacrifice of model utility. Song et al.~\cite{song2020systematic} find that adversarial regularization is not better than early stopping~\cite{caruana2001overfitting} when evaluated by a suite of attacks including both NN-based attacks and metric-based attacks. They recommend that any defense that trades off a reduction in MIA accuracy at the cost of a reduction in utility should be compared with early stopping as a baseline. 
    
    \textbf{MemGuard~\cite{jia2019memguard}:} Jia et al.~\cite{jia2019memguard} obfuscate the prediction vector with a well-designed noise vector using the  perspective of adversarial examples to confuse the membership inference classifier. Since MemGuard doesn't change prediction results, and only obfuscates confidence information, it maintains the original classification accuracy of the undefended model.  
    Song et al.~\cite{song2020systematic} shows that MemGuard~\cite{jia2019memguard} lacks consideration of strategic adversaries; though resistant to NN-based attack, MemGuard underestimates the threat of metric-based attacks.
    
    \textbf{DP-based defenses:} Differential privacy~\cite{dwork2008differential} is a formal framework that provides a rigorous privacy guarantee. In machine learning, DP-based defenses add noise to the training process of a classifier such as DP-SGD~\cite{abadi2016deep}. However, it is challenging to perform machine learning with differential privacy while achieving acceptable utility loss and privacy guarantees\cite{jayaraman2019evaluating, rahman2018membership}~(See Section~\ref{subsec:dpsgd} and Section~\ref{sec:related}).
\section{Our Defense Architecture}
\label{sec: ourdefense}
\begin{figure*}[ht]
\centering
\includegraphics[width = 6 in]{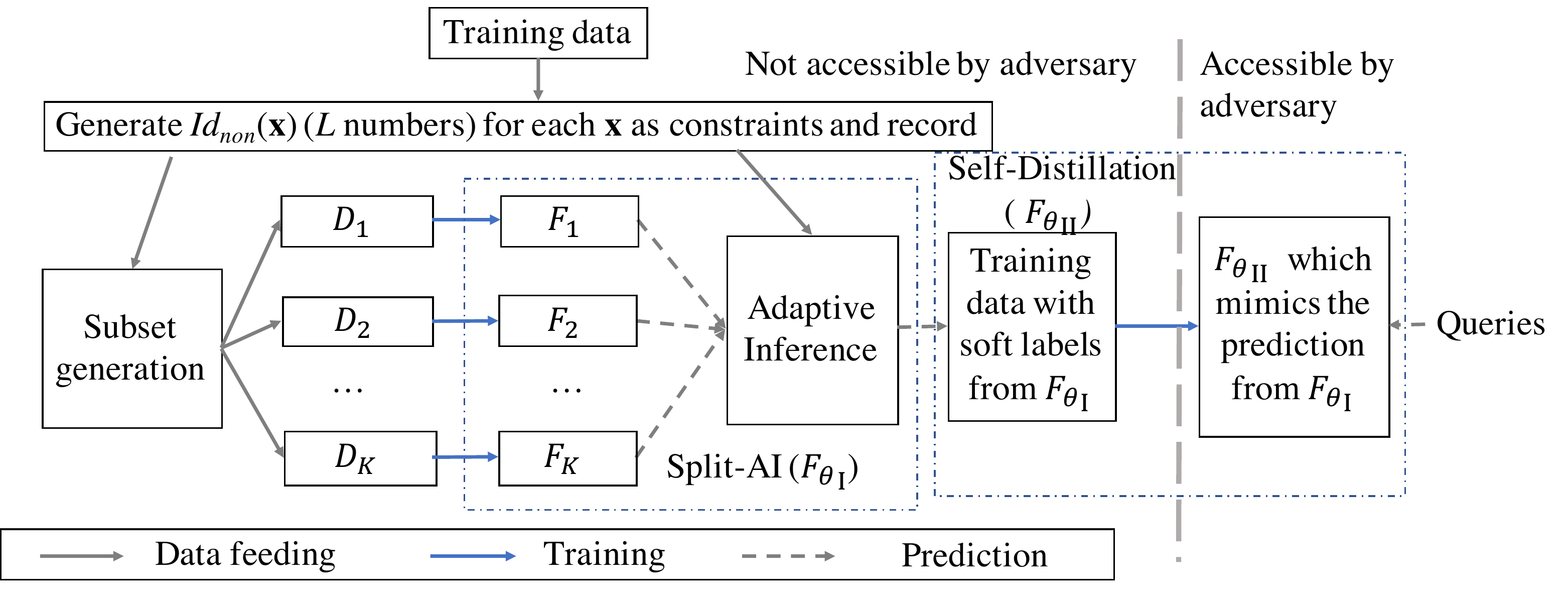}
\caption{Our end-to-end defense framework with the \archname and Self-Distillation components.}
\label{fig:system}
\end{figure*}
\label{sec:methodology}

In this section, we first present an overview of our defense framework and then describe the two key framework components: \archname and Self-Distillation. 

\subsection{Overview}
\label{subsec:overview}

MIAs aim to distinguish members and non-members of the private training data of a model. These attacks use the fact that the trained model has a different behavior on member and non-member data. This difference in behavior can appear in different forms, for example, the accuracy of model might be different on members and non-members~\cite{salem2018ml}, or the confidence might be higher on member inputs~\cite{yeom2018privacy,song2019privacy, song2020systematic}. Similarly, the model might be more likely to correctly classify the samples around the member examples compared to those around non-member examples~\cite{choo2020label, li2020label}. MIAs leverage these differences to obtain an attack advantage that is better than a random guess even in the black-box setting. Current defenses typically consider adding specific constraints during the optimization process, either in the training phase\cite{nasr2018machine} or in the inference phase\cite{jia2019memguard}, to reduce the mismatch of model behavior on members and non-members. However, optimization under multiple constraints for machine learning, which is usually non-convex, is a computationally hard task. We instead propose a framework to defend against MIAs by training multiple sub-models using subsets from whole training set and introducing a specific adaptive inference technique that exploits the following intuition: \emph{if a training sample is not used to train a sub-model, that sub-model will have similar behavior on that training sample and non-members.}~Section~\ref{subsec:results} shows the advantage of our defense in improving the trade-off between membership privacy and utility, which is based on this intuition, over existing membership inference defenses~(MemGuard~\cite{jia2019memguard} in Section~\ref{subsubsec:memguard} and adversarial regularization~\cite{nasr2018machine} in Section~\ref{subsubsec:advreg}).

Based on this intuition, we propose a framework, \sysname, composed of two components to defend against MIAs. The first component, which we call \emph{\archname}, trains an ensemble of $K$ sub-models with overlapping subsets of the training dataset. The constraint for each subset is as follows: for a given training sample, there are $L$ sub-models which are not trained with that training sample, and therefore, they will behave similarly on that training sample and the non-member samples. \archname applies adaptive inference for members and non-members: For a member sample, \archname  computes $L$ predictions of the $L$ sub-models which are not trained with the member sample, and  outputs the average of the $L$ predictions as the final prediction. For a non-member sample, the adaptive inference  randomly samples $L$ sub-models from the $K$ total sub-models, subject to a specific distribution, and returns the average of the $L$ predictions on the non-member as the final prediction. We detail our algorithm and explain why it preserves membership privacy in Section~\ref{subsec:se}.

The second component, which we call \emph{Self-Distillation},  addresses the two weaknesses of  \archname: its potential privacy risks due to multi-query/adaptive attacks and its high inference-time computation overhead. Specifically, the Self-Distillation component transfers the knowledge of the model obtained by \archname into a new model by using the soft labels of the training set from \archname. We call this Self-Distillation because it does not require any public dataset for distillation. As we will demonstrate, this protected model from Self-Distillation has similar classification performances as \archname with significantly lower computation overhead,  and can protect against advanced multi-query and adaptive MIA attacks. 

As we study the black-box MIAs, only the final prediction vectors or predicted labels of the protected model from Self-Distillation are available to the attacker. Figure~\ref{fig:system} gives an overview of our defense, where we denote \archname as $F_{\theta_{\mathrm{I}}}$ and protected model from Self-Distillation as $F_{\theta_{\mathrm{II}}}$.\footnote{PATE\cite{papernot2016semi, papernot2018scalable} also trains multiple sub-models to provide privacy but with a public dataset. We detailed the difference between our SELENA and PATE in Section~\ref{subsec:PATE}.} Next, we detail \archname and Self-Distillation.

\subsection{Our \archname Ensemble Architecture}
\label{subsec:se}
Here we describe \archname, the first component of our system.

\paragraph{\archname's training:}
Following the intuition in Section~\ref{subsec:overview}: we train $K$ sub-models and ensure that each training sample is not used to train $L$ sub-models such that these $L$ sub-models will have similar behavior on this training sample and other non-members. We accomplish this via a specific data partitioning strategy:

\emph{For each data point $\textbf{x}$ in the training set, we randomly generate $L$ non-model indices from \{$1$, 2, ..., $K$\} to denote the $L$ non-models that are not trained with the data point and record the identification numbers of these $L$ non-model indices~(denoted as ${Id}_{{non}}(\textbf{x})$\footnote{${Id}_{{non}}(\textbf{x})$ records $L$ sub-model indices which are not trained with $\textbf{x}$.}). We then generate the dataset partition based on these non-model indices. For each subset $D_i$, we will only use those training samples which do not include $i$ in their non-model indices.}

    \begin{figure}[htbp]
        \centering
        \subfigure[\archname's training]{
        \begin{minipage}[t]{0.5\linewidth}
        \centering
        \includegraphics[width=1.5in]{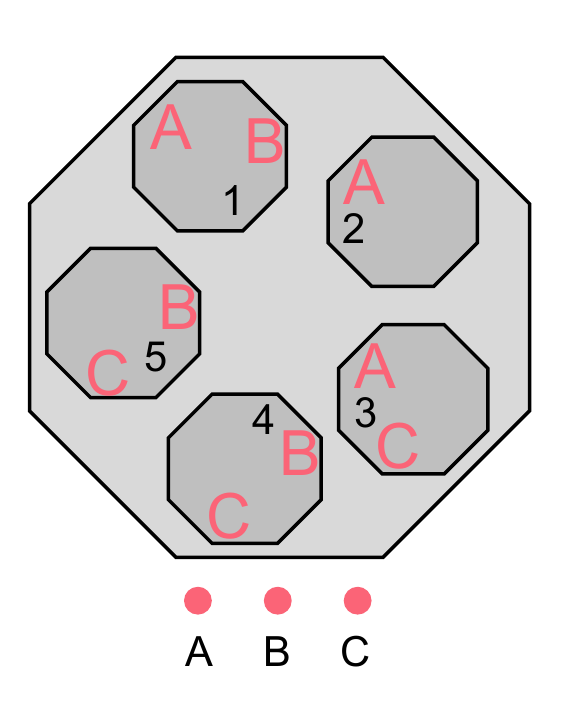}
        \label{fig:3a}
        \end{minipage}
        }%
        \subfigure[\archname's inference]{
        \begin{minipage}[t]{0.5\linewidth}
        \centering
        \includegraphics[width=1.5in]{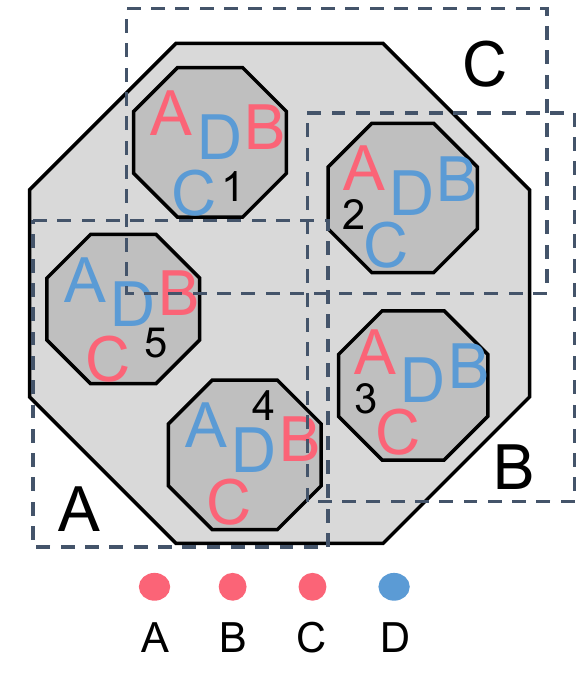}
        \label{fig:3b}
        \end{minipage}
        }%

    \caption{Illustration of \archname's data partition for training and adaptive inference in $K=5, L=2$ for member samples $A,B,C$ and non-member sample $D$~(red color for members and blue color for non-members). ${Id}_{{non}}(A)=(4,5), {Id}_{{non}}(B)=(2,3), {Id}_{{non}}(C)=(1,2)$. }
    \label{fig:splitai}
    \end{figure}

Figure~\ref{fig:3a} illustrates this partition strategy for three training samples~($A$, $B$, $C$) under the setting of $K=5,L=2$.
We randomly generate non-member sub-model indices: ${Id}_{{non}}(A)=(4,5)$, ${Id}_{{non}}(B)=(2,3)$, ${Id}_{{non}}(C)=(1,2)$. Therefore, $A$ is used to train sub-model 1, 2, 3. $B$ is used to train sub-model 1, 4, 5. $C$ is used to train sub-model 3, 4, 5.

This specific data partition strategy ensures that for each data point, we have $L$ sub-models which are not trained with it. This facilitates our key intuition in Split-AI: we use models that are not trained with a data point to estimate its soft label while protecting the membership information. $K$ and $L$ are parameters of our framework. The approximate size of each subset is $((K-L)/K) \times |D_{tr}|$. We then train $K$ sub-models $F_{i}$, one for each subset of the training data $D_i$, which have the same architecture and hyper-parameter settings.

\paragraph{\archname's inference:} We now describe the adaptive inference based ensemble strategy for members and non-members. For each queried sample $\textbf{x}$, the ensemble will check whether there is an exact match of $\textbf{x}$ in the training set:
\begin{compactitem}
    \item If so, which indicates that $\textbf{x}$ is a member, the defender will average the prediction vectors on $\textbf{x}$ from $L$ models which are not trained with $\textbf{x}$ as the output;
    \item If not, the defender will randomly use non-member indices of a member sample $\textbf{x}'$ and average the prediction vectors on $\textbf{x}$ from $L$ models of ${Id}_{non}(\textbf{x}')$ as the output.
\end{compactitem}

Figure~\ref{fig:3b} illustrates the adaptive inference for three member samples~$(A$, $B$, $C)$ and one non-member sample $D$ following the setting in Figure~\ref{fig:3a}. $A$ is non-member for sub-model 4, 5; $B$ is non-member for sub-model 2, 3; $C$ is non-member for sub-model 1, 2; and $D$ is non-member for all sub-models. Adaptive inference will average on non-member indices in sub-models for $A$, $B$, $C$ and randomly select one member sample's non-member indices for non-member sample $D$.

\begin{algorithm}[t]
\caption{\archname Model ${F_{\theta_{\mathrm{I}}}}$}
\label{alg:A1}
\begin{algorithmic}
\STATE {Initialize:\\ $K$: total number of sub-models $F_{1}, F_{2}, ..., F_{K}$\\
$L$: for each training sample, the number of sub-models which are not trained with it.
\\$(X_{train}, Y_{train})$: training data and labels}
\STATE{\textbf{Training Phase:} }
\STATE{Randomly generate the $L$ non-model indices for each training sample ${Id}_{non}(\textbf{x})$.}
\FOR {$i = 1$ to $K$}
\STATE{Construct subset $(X_{train}^i, Y_{train}^i)$ for model $F_{i}$ based on the recorded ${Id}_{non}$s for models: \{$(\textbf{x},y)$: $(\textbf{x},y) \in (X_{train}, Y_{train})$, $i$ not in ${Id}_{non}(\textbf{x})$\}}

\FOR{\text{number of the training epochs} }
\STATE{Update $F_{i}$ by descending its stochastic gradients over  $l(F_{i}(X_{train}^i ),Y_{train}^i ) $.}
\ENDFOR
\ENDFOR
\STATE{\textbf{Inference Phase:} $F_{\theta_\mathrm{I}}(\textbf{x}$)}
\STATE{Given $\textbf{x}$}
\IF{$\textbf{x}$ in $X_{train}$}
\STATE{$$F_{\theta_{\mathrm{I}}}(\textbf{x}) = \frac{1}{L}{\sum_{i\in {Id}_{non}(\textbf{x})}}F_{{i}}(\textbf{x})$$}
\ELSE
\STATE{Randomly select $\textbf{x}'$ in the training set, 
$$F_{\theta_{\mathrm{I}}}(\textbf{x}) = \frac{1}{L}{\sum_{i\in {Id}_{non}(\textbf{x}')}}F_{{i}}(\textbf{x})$$}

\ENDIF
\end{algorithmic}
\end{algorithm}

Algorithm~\ref{alg:A1} presents the entire procedure for \archname. We formally prove that \archname strategy is resilient to \emph{direct} single-query MIAs~(discussed in Section~\ref{subsec: attack}) and can reduce the accuracy of such attacks to a random guess by Theorem~\ref{thm:single_query_split_train} in Appendix~\ref{appendix:splitaidsq}, which provides the theoretical foundation of our defense capability. The intuitive explanation for this proof is that for each data point, the distribution of output of this algorithm on this given point $\textbf{x}$ is independent of the presence of $\textbf{x}$ in the training set. This is because, we will not use models that are trained with $\textbf{x}$ to answer queries, even if $\textbf{x}$ is in the training set. Our evaluation on \archname in Appendix~\ref{appendix:splitai} is consistent with Theorem~\ref{thm:single_query_split_train}: \archname  maintains a good classification accuracy by leveraging flexible overlapping subsets and an ensemble of $L$ sub-models.

\subsection{Our Self-Distillation Mechanism}
\label{subsec:sd}

\paragraph{Limitations of \archname.} While \archname is resilient to direct single-query MIAs, an adversary can leverage more advanced attacks. For example, instead of direct query, attacker may do an indirect query~\cite{long2018understanding} for the target sample (see Section~\ref{subsec: attack} for definitions) or may do multiple queries for one target sample to identify membership information, as suggested in recent work~\cite{choo2020label}. \archname suffers from severe privacy risks under the setting of such aggressive attacks that exploit the matching process for training samples: (1) An adaptive attacker can make a single \emph{indirect} query by adding a small noise to the target sample. Such adaptive attacks can fool the matching process used in the inference strategy that checks if the input is a training member or non-member. \archname will recognize noisy training samples as non-members and may end up using sub-models trained with the target sample, thus leaking membership information. (2) Attacker can perform replay attacks by making multiple queries for the same target sample: \archname will only have one possible prediction vector for members, while approximately $C_K^L$ possible prediction vectors for non-members. Furthermore, \archname incurs a computational overhead during inference:
For each queried sample, \archname first needs to identify whether it is in the training set, thus  incurring overhead for this matching process. Second, \archname needs to perform inference on $L$ models for each queried sample, while conventional approaches only perform inference on a single model. 

\paragraph{Self-Distillation.} To overcome the above limitations, we need a more sophisticated defense mechanism and we correspondingly introduce the second component of our framework. We leverage distillation, which is proposed by Hinton et al.~\cite{hinton2015distilling} to reduce the size of NN architectures or ensemble of NN architectures. To be more specific, here we use a method which we call \emph{Self-Distillation}: we first apply \archname  to get the prediction vectors for the training samples. We then use the same training set along with the prediction vectors (obtained from \archname) as soft labels to train a new model using conventional training. The new protected model benefits from distillation to
to maintain a good classification accuracy. For queried samples, the defender now just need to do the inference on the new protected model $F_{\theta_{\mathrm{II}}}$ distilled from the \archname.  For defense capability, we prove that this new model largely preserve \archname's defense ability against direct single-query attack by Theorem~\ref{thm:distill} and Corollary~\ref{cor:last} under mild stability assumptions~(Definition~\ref{def:stabledistillation}) in Appendix~\ref{appendix:selenadsq}. Note that our theoretical analysis of \sysname is only valid for single-query direct attacks. In fact, there exist some datasets that \sysname cannot obtain provable privacy for under multi-query attacks. This includes settings with similar data points that have different labels~(see Appendix~\ref{appendix:correlationdiscuss}).

\textbf{Self-Distillation overcomes the privacy limitations of \archname and mitigates advanced MIAs.} 
The defender controls the Self-Distillation component and ensures that Self-Distillation only queries each exact training sample once. The attacker only has black-box access to the protected output model of Self-Distillation, but cannot access the Split-AI model. Hence, the attacker cannot exploit the soft labels computation of Split-AI as discussed before. Hence, the final protected model from Self-Distillation effectively mitigates the replay and multi-query indirect attacks:

    1. For replay attack: each sample is only queried once during the Self-Distillation process, while replay attack requires at least two queries of each sample to obtain advantage over random guess. In addition, the final protected model has a deterministic behavior with only one possible prediction vector for each queried sample.
    
    2. For single-query indirect attacks: each exact sample is queried during the Self-Distillation process and noisy samples around the training sample are not queried. In addition, the attacker only has black-box access to the protected model from Self-Distillation~(and no access to defender's \archname): indirect query attacks are thus limited in obtaining additional membership information~(See Appendix~\ref{appendix_subsec:necessityofdistill} for more details).

Self-Distillation also solves the computational overhead of the \archname in inference: the defender now does not need to check whether the queried sample is a training sample and it only needs to make inference on a single Self-Distilled model.

In Section~\ref{sec: eval}, we will evaluate the effectiveness of our whole framework via rigorous experimental analysis including direct single-query attacks, label-only attacks and adaptive attacks.

\section{Membership Inference Attacks Evaluated}
\label{sec:adaptiveattack}

To (empirically) demonstrate the privacy of our system, we evaluate it  against three main classes of MIA attacks. 
First, we evaluate our attack against the single-query and label-only MIA attacks introduced in earlier.
Specifically, evaluate against the direct single-query attacks in Section~\ref{subsec: attack}, and  for label-only attacks, we  use the boundary attack for all three datasets and the data augmentation attack for CIFAR100. We describe these attacks in a detailed manner in Appendix~\ref{appendix:label}. 

Additionally, we evaluate our system against \emph{adaptive membership inference attacks}, as introduced in the following. 
Song and Mittal~\cite{song2020systematic} emphasizes the importance of placing the attacker in the last step of the arms race between attacks and defenses: the defender should consider adaptive attackers with knowledge of the defense to rigorously evaluate the performance of the defenses. Therefore, here we consider attacks that are tailored to our defense. As our defense leverages soft labels from the \archname ensemble to train a new model $F_{\theta_{\mathrm{II}}}$ in Self-Distillation, we need to analyze whether and how an attacker can also leverage the information about soft labels.

    We first note that an attacker is unable to directly interact with our \archname to directly estimate soft labels, since the prediction API executes queries on the model produced by the Self-Distillation component. Second, we expect that when the model provider finishes training the protected model $F_{\theta_{\mathrm{II}}}$ with soft labels obtained from \archname, it can safely delete the sub-models and soft labels of the training set to avoid inadvertently leaking information about the soft labels. 
    However, an attacker can still aim to indirectly \emph{estimate} soft labels.

    As we assume that the attacker knows partial membership of the exact training set in evaluating membership privacy risks~(specifically, half of the whole training set) and attacker cannot have access to the defender's non-member model indices ${Id}_{non}({\textbf{x}})$ for training set, the attacker will generate new non-member model indices ${Id}_{non}({\textbf{x}})'$ for these known member samples to train a new shadow \archname ensemble and use the shadow \archname to estimate soft labels of the target samples. 
    The attacker can then use such soft labels as an additional feature to learn the difference in target model's behavior on members and non-members, and launch MIAs on $F_{\theta_{\mathrm{II}}}$.
    The shadow \archname discussed in our paper is stronger than original shadow models~\cite{shokri2017membership} since it is trained with exact knowledge of the partial training dataset.
    
    We design four adaptive direct single-query attacks\footnote{Our Table~\ref{tab:allattacks} shows that label-only attacks is weaker than direct single-query attacks on undefended model. We have also designed adaptive multi-query label-only attacks against \sysname and evaluated on Purchase100 dataset, which is better than original label-only attacks, but weaker than adaptive direct single-query attacks.} including two NN-based attacks and two metric-based attacks to leverage the information estimated soft labels. To clarify, $F_{\theta_{\mathrm{II}}}$ denotes the protected target model which answers the attacker's queries and $F_{\theta_{\mathrm{I}}}'$ denotes attacker's shadow \archname.
    
    \textbf{MIAs based on NN and soft labels}: 
    The first NN-based attack concatenates the soft labels obtained from $F_{\theta_{\mathrm{I}}}'$, the predicted confidence from $F_{\theta_{\mathrm{II}}}$ and the one-hot encoded class labels as features to train a neural network attack model~(denoted as $I_{\text{NN1}}$). The second attack utilizes the {difference between} the estimated soft labels from $F_{\theta_{\mathrm{I}}}'$ and outputs from $F_{\theta_{\mathrm{II}}}$, and uses this difference as an input to the neural network architecture used by Nasr et al.~\cite{nasr2018machine}~(denoted as $I_{\text{NN2}})$.
    
    \textbf{MIAs based on distance between soft labels and predicted confidence}: Similar to previous metric-based attacks~\cite{song2020systematic}, an attacker may try to distinguish between members and non-members by leveraging the distance between estimated soft labels from $F_{\theta_{\mathrm{I}}}'$, and the prediction confidence vectors from $F_{\theta_{\mathrm{II}}}$. We have:
    $$I_{\text{dist}}(F_{\theta_{\mathrm{II} }}(\textbf{x}), F_{\theta_\mathrm{I}}'(\textbf{x}),y ) = \mathds{1}\{ Dist(F_{\theta_{\mathrm{II}}}(\textbf{x}), F_{\theta_{\mathrm{I}}}'(\textbf{x})) \leq  \tau_{(y)}\}$$
    $$\text{or, } I_{\text{dist}}(F_{\theta_{\mathrm{II} }}(\textbf{x}), F_{\theta_\mathrm{I}}'(\textbf{x}),y ) = \mathds{1}\{ Dist(F_{\theta_{\mathrm{II}}}(\textbf{x}), F_{\theta_{\mathrm{I}}}'(\textbf{x})) \geq  \tau_{(y)}\}$$
    where we apply both class-dependent threshold $\tau_y$ and class-independent threshold $\tau$ and we will report the highest MIA accuracy. In this work we consider $L_2$ distance $I_{\text{L}_\text{2}\text{-}dist}$ and cross-entropy loss $I_{\text{CE-}dist}$~(since the cross-entropy loss function is used for training our defense models).

\section{Evaluations} 
\label{sec: eval}
    
    In this section, we first briefly introduce the datasets and model architectures used to train the classification  models in Section~\ref{sec: eval-setup}. More details can be found in Appendix~\ref{appendix:setup}.

    Next we systematically evaluate our end-to-end defense framework including its efficacy against (1) direct single-query attacks, (2) indirect label-only attacks, and (3) adaptive attacks and make a comparison with undefended model, MemGuard~\cite{jia2019memguard}, adversarial regularization~\cite{nasr2018machine} and early stopping~\cite{song2020systematic} by considering both the utility and membership privacy risks in Section~\ref{subsec:results}.
    \subsection{Experimental Setup} 
    \label{sec: eval-setup}

    We use three benchmark datasets and target models which are widely used in prior works on MI attacks and defenses.
    
    \textbf{Datasets.} Purchase100, Texas100 and CIFAR100. We follow Nasr et al.~\cite{nasr2018machine} to determine the partition between training data and test data and to determine the subset of the training and test data that constitutes attacker's prior knowledge. Specifically, the attacker's knowledge corresponds to half of the training and test data, and the MIA success is evaluated over the remaining half. 
    
    \textbf{Target Models.} For CIFAR100, we use ResNet-18~\cite{he2016deep}, which is a benchmark machine learning model widely used in computer vision tasks. For Purchase100 and Texas100, we follow previous work~\cite{nasr2018machine} to use a 4-layer fully connected neural network with layer sizes $[1024,512,$ $256,100]$.
    
    In our defense, we set $K=25$ and $L=10$ for all three datasets. To show that our defense is effective across multiple model architectures and settings of $K$ and $L$, we vary activation functions, width, depth of target models, as well as different choices of $K$ and $L$ and present the results in Appendix~\ref{appendix:ablation}. We will release code to reproduce all our experiments.

    \begin{table*}[ht]
        \caption{Comparison of membership privacy and accuracy on training/test set of undefended model, previous defenses and  \sysname on three different datasets. AdvReg refers to adversarial regularization. The last column is the highest attack accuracy for each row, i.e. for a specific defense on one dataset, the highest attack accuracy that MIAs can achieve. The last column gives an overview of comparison: the lower the best attack accuracy, lower the membership inference threat. For each dataset, the defense which has the lowest corresponding attack accuracy is bold in the column of best direct single-query attack, best label-only and best attack.}
        
        \label{tab:allattacks}
        \centering
        \begin{tabular}{cccccccc}
        \toprule
        dataset&defense &\tabincell{c}{acc on \\training set} &\tabincell{c}{acc on \\test set}&\tabincell{c}{best direct \\single-query\\ attack}&\tabincell{c}{best\\label-only \\attack}&\tabincell{c}{best adaptive \\attack} &\tabincell{c}{best attack}\\
        \midrule
        \multirow{4}{*}{Purchase100} &None &99.98\% &83.2\% &{{67.3\%}} &65.8\%  &N/A &67.3\%\\
        &MemGuard&99.98\% &83.2\% &58.7\% &{65.8\%} &N/A &65.8\%\\
        &AdvReg &91.9\% &78.5\% &57.3\% &{57.4\%} &N/A &57.4\%\\
        &\textbf{\sysname} &82.7\% &79.3\% &\textbf{53.3\%} &\textbf{53.2\%} & {54.3\%} &\textbf{54.3\%}\\
        \midrule
        \multirow{4}{*}{Texas100} &None &79.3\% &52.3\% &{66.0\%} &64.7\% &N/A &66.0\% \\
        &MemGuard &79.3\% &52.3\% &63.0\% &{64.7\%} &N/A &64.7\%\\
        &AdvReg &55.8\% &45.6\% &{60.5\%} &56.6\% &N/A &60.5\%\\
        &\textbf{\sysname} &58.8\% &52.6\% &\textbf{54.8\%} &{\textbf{55.1\%}} &{54.9\%} &\textbf{55.1\%}\\
        \midrule
        \multirow{5}{*}{CIFAR100} &None &99.98\% &77.0\% &{74.8\%} &69.9\% &N/A &74.8\%\\
        &MemGuard &99.98\% &77.0\% &68.7\% &{69.9\%} &N/A &69.9\%\\
        &AdvReg &86.9\% &71.5\% &58.6\% &{59.0\%} &N/A &59.0\%\\
        &\textbf{\sysname} &78.1\% &74.6\% &\textbf{55.1\%} &\textbf{54.0\%} & {58.3\%}  &\textbf{58.3\%}\\ 
        \bottomrule
        \end{tabular}
    \end{table*}

    \begin{figure*}[htbp]
        \centering
        \subfigure[Purchase100]{
        \begin{minipage}[t]{0.3\linewidth}
        \centering
        \includegraphics[width=2.1in]{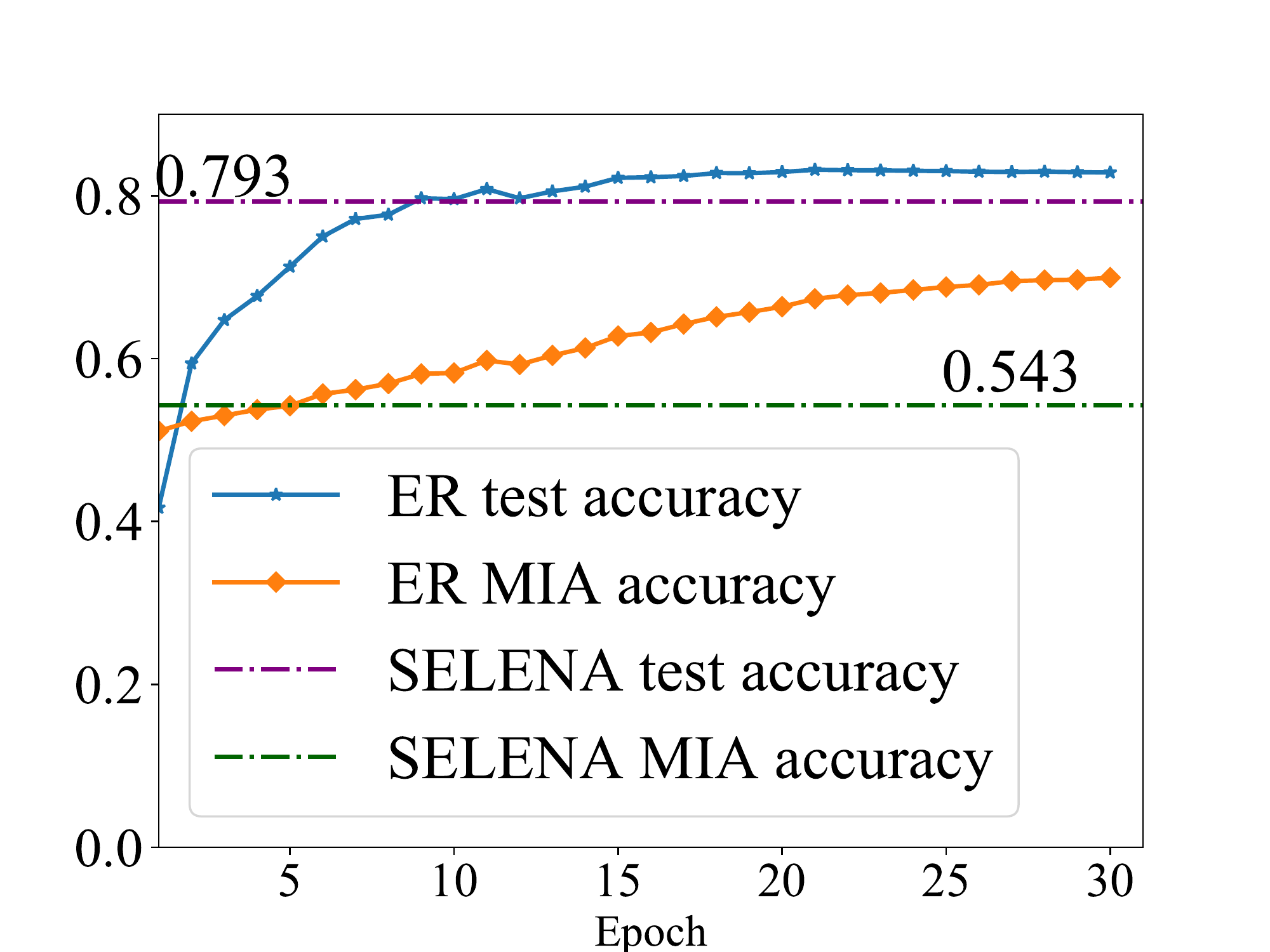}
        \end{minipage}
        }
        \subfigure[Texas100]{
        \begin{minipage}[t]{0.3\linewidth}
        \centering
        \includegraphics[width=2.1in]{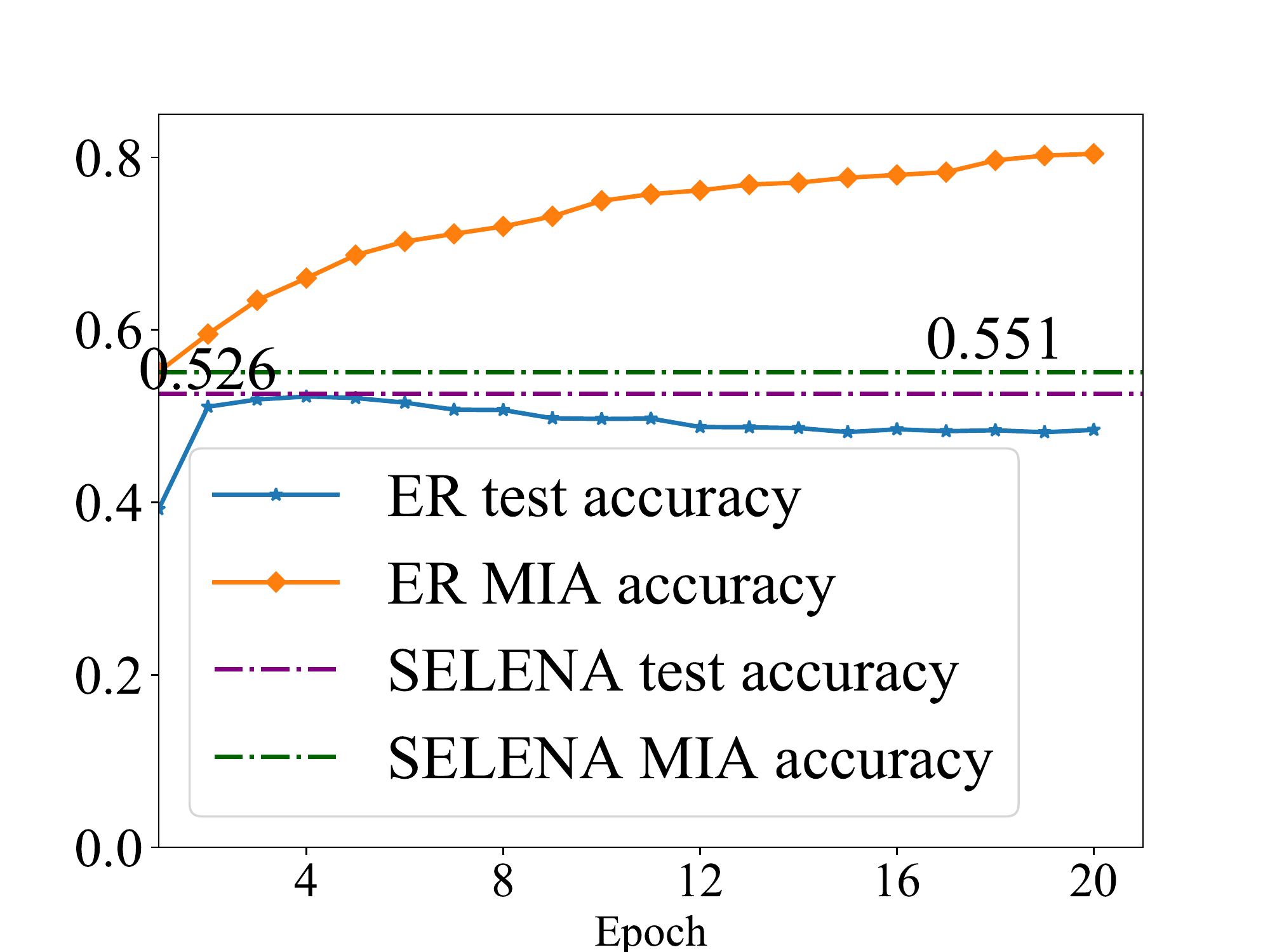}
        \end{minipage}
        }
        \subfigure[CIFAR100]{
        \begin{minipage}[t]{0.3\linewidth}
        \centering
        \includegraphics[width=2.1in]{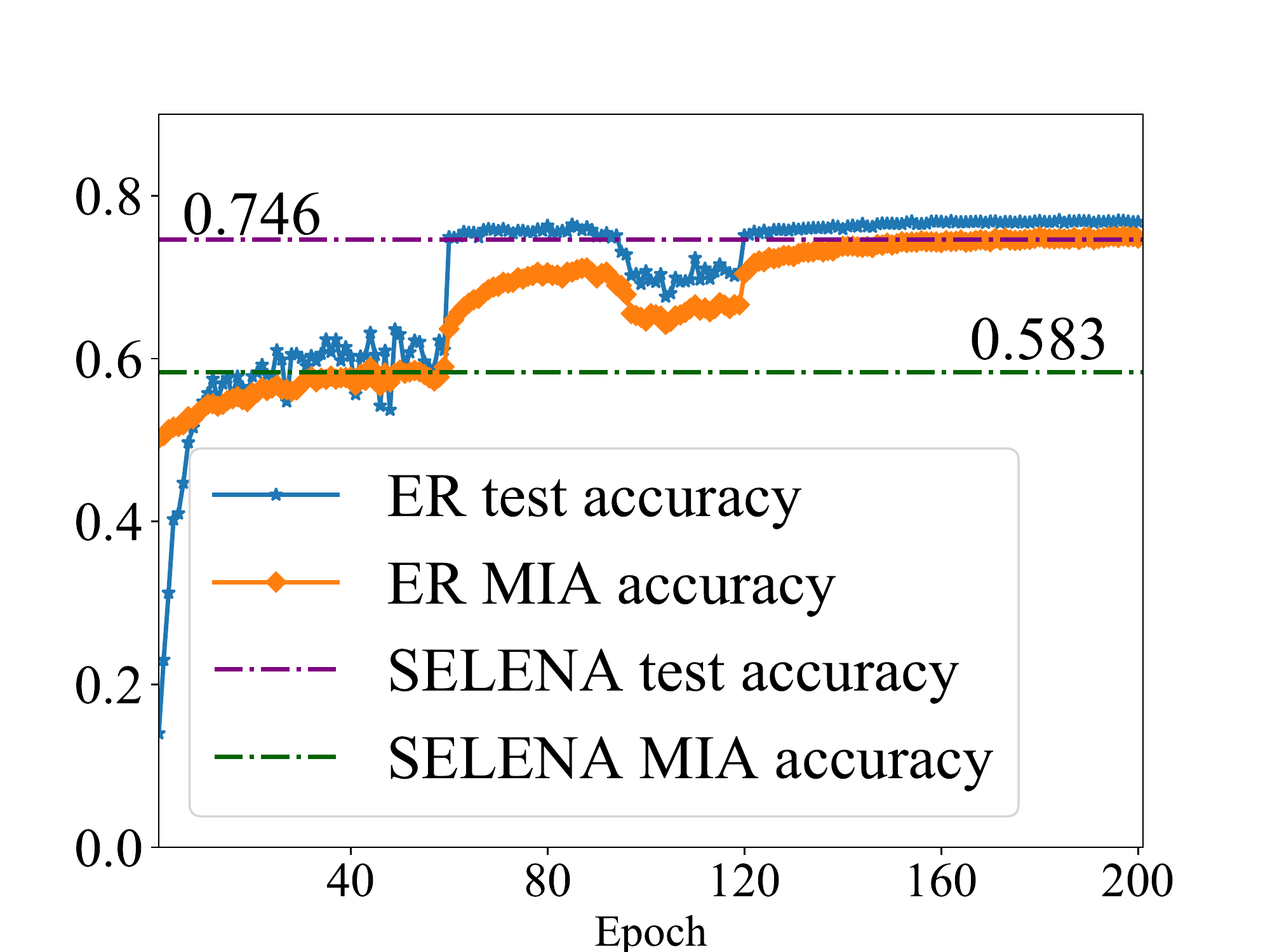}
        \end{minipage}
        }
        \centering
    \caption{Detailed comparison of \sysname with early stopping. From left to right are results for Purchase100, Texas100 and CIFAR100. The solid curves are the test accuracy and MIA accuracy with corresponding training epochs. ER denotes early stopping. The dashed lines are the test accuracy and MIA accuracy of \sysname, which is shown in Table~\ref{tab:allattacks}. Our defense achieves a better privacy-utility trade-off than all epochs in the conventional training.}
    \label{fig:earlystopping}
    \end{figure*}

    \subsection{Results}
    \label{subsec:results}
    
    Table~\ref{tab:allattacks} summarizes the classification accuracy and best attack accuracy for each attack type, including comparison with both undefended models~(in Section~\ref{subsubsec:undefended}) and previous defenses~(MemGuard in Section~\ref{subsubsec:memguard} and adversarial regularization in Section~\ref{subsubsec:advreg}). In addition, we also compare our SELENA with early stopping in Section~\ref{subsubsec:earlystop}. 

    \subsubsection{Comparison with Undefended Model}\label{subsubsec:undefended}
    
    We first compare our SELENA with undefended model on both membership privacy threats and classification accuracy.
    
    \textbf{\sysname significantly reduces membership inference risks.} From Table~\ref{tab:allattacks}, we can see that our defense leads to a significant reduction in privacy risks. Across three types of attacks, the MIA accuracy against our defense is no higher than 54.3\% on Purchase100, 55.1\% on Texas100 and 58.3\% on CIFAR100. 
    On the other hand, MIA accuracy against undefended models (in the absence of our defense) is much higher: such MIA advantage over a random guess is a factor of $3.0\sim 4.0$ higher than our defense.

    \textbf{\sysname achieves its privacy benefits at the cost of a small drop in utility.} Compared with undefended models, our defense only has a small utility loss (while providing substantial privacy benefits). Compared to undefended models, the classification(test) accuracy of our defense incurs at most $3.9\%$ accuracy drop~(on Purchase100), and even no accuracy drop on Texas100. We also discuss a more flexible trade-off between utility and membership privacy by combining the outputs from \archname and ground truth labels as the soft labels in Self-Distillation in Appendix~\ref{appendix:tradeoff}.

    We also note that even though our approach has a small loss in utility, it achieves a better utility-privacy trade-off compared to prior defenses like MemGuard, adversarial regularization and early stopping, which we discuss next.
    
    \subsubsection{Comparison with MemGuard}
    \label{subsubsec:memguard}
    \textbf{While the test accuracy of our defense is a little lower than MemGuard (MemGuard has the same test accuracy as the undefended model), the MIA accuracy against MemGuard is much higher than our defense.}  Compared to a random guess, which achieves 50\% attack accuracy, the best attacks on MemGuard can achieve $14.7\% \sim 19.9\%$ advantage over a random guess, 
    which is a factor of $2.4\sim 3.7$ higher than our defense. In general, MemGuard does not have any defense against MIAs that do not rely on confidence information: attacker can use label-only attacks as adaptive attacks since MemGuard only obfuscates confidence. 
    
    \subsubsection{Comparison with Adversarial Regularization}\label{subsubsec:advreg}
    \textbf{Our defense achieves higher classification accuracy and lower MIA accuracy compared with adversarial regularization.} 
    The classification accuracy of our defense is higher than adversarial regularization across all three datasets, and as high as 7.0\% for the Texas100 dataset. For MIAs, our defenses achieves significantly lower attack accuracy than adversarial regularization. MIA attacks against adversarial regularization is higher than our defense across all three datasets, and its advantage over random guess is at most a factor of 2.1 than our defense~(on Texas100). Besides, adversarial regularization is much harder to tune and can also take more training time (by a factor up to 7.8) compared to our defense when multiple GPUs are used in parallel~(see Section \ref{subsec: efficiency}).

    \subsubsection{Comparison with early stopping}\label{subsubsec:earlystop}
    We further compare our defense with early stopping, which can also help in minimizing the difference in model behavior on members and non-members~\cite{song2020systematic}. Specifically, we will compare the model performance of an undefended model in each epoch during the training process and our final protected model  $F_{\theta_{\mathrm{II}}}$. For early stopping, we only consider direct single-query attack~(due to their strong performance on undefended models). Figure~\ref{fig:earlystopping} shows a detailed comparison between our defense $F_{\theta_{\mathrm{II}}}$ and early stopping. The dashed lines are the classification accuracy on test set and the best MIA accuracy of our defense, which is already reported in Table~\ref{tab:allattacks}. 
    The solid lines correspond to classification accuracy on test set and MIA accuracy using the undefended model as a function of the training epochs. As we can see from Figure~\ref{fig:earlystopping}, \emph{our defense significantly outperforms early stopping.}
    
    \textbf{Comparison at similar attack accuracy.} The undefended model will only have same level of MIA accuracy as the dashed line of our defense at the very beginning of the training process. However the test accuracy of the undefended model at that point is far lower than that of our defense. For example, approximately, 
        \emph{for Texas100, when MIA accuracy against the conventional trained model is 55.1\%, the test accuracy of the undefended model is 13.4\% lower than that of our defense.} For other two dataset, when the MIA accuracy against the undefended model achieves similar attack accuracy as our defense, the test accuracy is 8.0\% lower on Purchase100 and 11.0\% lower on CIFAR100 compared to our defense.
       
    \textbf{Comparison at similar classification accuracy.} When the undefended model achieves the same classification accuracy on the test set as \sysname, the MIA accuracy against the undefended model is significantly higher than our defense. For example, \emph{when the test accuracy of the conventional model reaches 74.6\% on CIFAR100 (similar to our defense), the attack accuracy is 63.6\%, compared to the best attack accuracy of 58.3\% for our defense (which is 5.3\% lower).} We can see similar results on other datasets: when the test accuracy of undefended models achieves similar classification accuracy as our defense on Purchase100 and Texas100, the attack accuracy is 58.1\% on Purchase100 and 66.0\% on Texas100, which is 3.8\% and 10.9\% higher than \sysname respectively.

    We also highlight the following two points from Table~\ref{tab:allattacks}:

        1. Our \sysname effectively induces the similar behaviors including generalization, confidence, robustness for member and non-member samples and therefore the MIA attack accuracy is significantly reduced. Let us take the generalization gap $g$ as an example: in undefended models/MemGuard, $g$ is 16.78\% on Purchase100, 27.0\% on Texas100, 22.98\% on CIFAR100; in adversarial regularization, $g$ is 13.4\% on Purchase100, 10.2\% on Texas100 and 15.4\% on CIFAR100. In contrast, in our defense, $g$ is 3.4\% on Purchase100, 6.2\% on Texas100 and 3.5\% on CIFAR100: Our mechanism reduces the total generalization gap by a factor of up to 6.6 compared to undefended models/MemGuard, and a factor of up to 4.4 compared to adversarial regularization. 
        
        2. The additional estimation of soft labels provided by shadow \archname (using the entirety of the attacker's knowledge) provides additional information to the attacker which enhances the accuracy of our adaptive attacks: attack has more advantage over random guess than direct single-query attack and label-only attacks. However, even considering the strong adaptive attacks, \sysname still achieves lower attack accuracy in comparison to previous defenses, which validates the defense effectiveness of our \sysname. In addition, Appendix~\ref{appen:adaptiveattack} further analyze the membership privacy risks when the attacker knows different ratio of the training sets.

    In conclusion, using direct single-query attacks, label-only attacks, as well as adaptive attacks with estimated soft labels, we show that our approach outperforms previous defenses and achieves a better trade-off between utility and practical membership privacy. We also discuss that \archname can defend against direct single-query attack while maintaining a good classification accuracy, and the necessity for the adaptive ensemble in \archname and Self-Distillation in Appendix~\ref{appendix:components}.

\section{Discussions} 
    In this section, we will discuss the computation overhead of our defense, as well as comparison with PATE~\cite{papernot2016semi,papernot2018scalable}~(which uses a disjoint training set partition for sub-models and differential privacy to protect privacy), model stacking~\cite{salem2018ml}~(which uses a disjoint training set partition to train a hierarchical architecture, intuited by dropout~\cite{srivastava2014dropout}) and DP-SGD~\cite{abadi2016deep}, which provides a provable privacy guarantee for neural networks.
    
    \subsection{Efficiency}
    \label{subsec: efficiency}
    
    One cost that our framework needs to pay is the use of additional computing resources in the training process as we train multiple sub-models for \archname.     
    Table~\ref{tab:traintime} and Table~\ref{tab:inferencetime} present the comparison for the training time cost and inference time cost of our \sysname with previous defenses.
    As MemGuard~\cite{jia2019memguard} focuses on post-processing techniques for prediction vectors of undefended models in the inference phase, we omit MemGuard in Table~\ref{tab:traintime} for training time and compare our \sysname with MemGuard in Table~\ref{tab:inferencetime} for inference time. For time comparison, all experiments are tested on a single NVIDIA Tesla-P100 GPU. We separately set the batch size as 512, 128, 256 during training for Purchase100, Texas100 and CIFAR100~(note that the batch size might also impact the running time, and here we maintain the batch size for each dataset across different defenses the same). For undefended model, adversarial regularization, and our \archname, we train 30, 20, and 200 epochs for Purchase100, Texas100, and CIFAR100. For Self-Distillation, we train 60, 30, and 200 epochs for Purchase100, Texas100, and CIFAR100 to ensure convergence. All running times are tested three times and we report the average of test time.

    \begin{table}[ht]
        \caption{Comparison of training time.}
        \label{tab:traintime}
        \centering
        \begin{tabular}{ccccc}
             \toprule
             Dataset &None&\tabincell{c}{AdvReg}&\tabincell{c}{\sysname\\sequential} &\tabincell{c}{\sysname \\parallel}\\
             \midrule
             \tabincell{c}{Purchase\\100}&9.5s &55.7s  &359.4s &73.5s \\
             \midrule
             \tabincell{c}{Texas100}&10.7s &111.6s &343.0s &68.0s\\
             \midrule
             \tabincell{c}{CIFAR100} &1.78h  &23.5h  &29.6h &3.0h\\
            \bottomrule
        \end{tabular}
    \end{table}

    Comparison of training time in Table~\ref{tab:traintime}: our defense~(\sysname sequential: sequentially train each sub-model on a single GPU) costs up to 6.1h more compute time than adversarial regularization (CIFAR100). However, we can simply accelerate training \archname by training several sub-models parallelly. For example, if we train all $K$ sub-models simultaneously~(\sysname parallel), the training time for \sysname is 73.5s for Purchase100, 68.0s for Texas100 and 3.0h for CIFAR100. In contrast, adversarial regularization cannot benefit from parallel training: there is only one model during training.

    \begin{table}[ht]
        \caption{Comparison of inference time. Test on 1000 samples: 500 members and 500 non-members. Batch size is 1.}
        \label{tab:inferencetime}
        \centering
        \begin{tabular}{ccc}
             \toprule
            Dataset & MemGuard&\sysname  \\
             \midrule
             \tabincell{c}{Purchase100} &702.7s &0.7s\\
             \midrule
            \tabincell{c}{Texas100}  &668.6s &0.7s\\
            \midrule
            \tabincell{c}{CIFAR100}  &768.5s &8.6s\\
            \bottomrule
        \end{tabular}
    \end{table}

    Comparison of inference time in Table~\ref{tab:inferencetime}: 
    MemGuard cost three orders of magnitude more per inference compared to SELENA since it has to solve a complex optimization problem to obfuscate prediction vectors for every query while \sysname only needs to perform computation on a single model. 
    
    In conclusion, we argue that the cost of computing resources in the training phase and no additional computation in inference phase is acceptable as the improvement in GPU technology are making the computing resources cheap while the privacy threat remains severe. If multiple GPUs are available, our approach can easily benefit from parallelization by training the $K$ sub-models in parallel. Finally, we can also tune the system parameters $K$ and $L$ to control the trade-off between the training time resources, model utility and privacy.
    
    \subsection{Comparison with PATE} 
    \label{subsec:PATE}
    PATE~\cite{papernot2016semi,papernot2018scalable} is a framework composed of teacher-student distillation and leverages public data to achieve a better privacy-utility trade-off for differential privacy. PATE uses a disjoint training set partition for sub-models in the teacher component. 
    To get the private label of the public dataset to train the student model, PATE applies noisy count among sub-models.

    There are three major differences between our work and PATE: (1) PATE requires a \emph{public dataset} to provide the provable end-to-end privacy guarantee, which is not possible in certain practical scenarios such as healthcare. Our defense does not need public datasets and provides a strong empirical defense against MIAs. (2) We apply a novel \emph{adaptive inference strategy} to defend against MIAs: for each training sample, we only use prediction of sub-models in Split-AI that are not trained with it as these sub-models will not leak membership information for it. PATE does not use adaptive inference and relies on majority voting over all sub-models. (3) We use \emph{overlapping} subsets to train sub-models. This allows our approach to obtain high accuracy for each sub-model with sufficient subset size. PATE faces the limitation of each sub-model being trained with much reduced subset size due to disjoint subsets.

    In addition, PATE incurs a 0.7\% $\sim$ 6.7\% drop in test accuracy~\cite{papernot2018scalable}, while the test accuracy drop in our defense is no more than 3.9\%.

    \subsection{Comparison with Model Stacking}
    \label{subsec:modelstacking}
    Model stacking~\cite{salem2018ml} uses a two-layer architecture: the first layer contains a NN and a Random Forest,  and the combination of two outputs from the first layer is forwarded to a Logistic Regression model. Here we briefly discuss the difference between model stacking and our defense.
    
    Model stacking requires a disjoint subset of data for each model: NN and Random Forest in first layer and logistic regression in second layer to help improve membership privacy. This may cause a decrease in test accuracy of the overall ensemble. Also, directly combing the two outputs from the first layer as input to the second layer may still leak information even when Logistic Regression is trained on another subset of data: the membership inference risks of NN or Random Forest may be directly forwarded on to the Logistic Regression module. Appendix~\ref{appendix:modelstacking} presents the experiments to compare our defense and model stacking, which supports above two statements. We do not include model stacking in Section~\ref{sec: eval} as it does not achieve the state-of-the-art performance~\cite{jia2019memguard,nasr2018machine}.
    
    \subsection{Comparison with DP-SGD}
    \label{subsec:dpsgd}
    In this work, we use the canonical implementation of DP-SGD and its associated analysis from the TensorFlow Privacy library.\footnote{https://github.com/tensorflow/privacy.} We varied the parameter $noise\_multiplier$ in the range of [1, 3] on Purchase100 and [1, 2] on Texas100 with a step size 0.2. We set the privacy budget $\epsilon$ = 4 and report the best classification accuracy for these two datasets.

    The test accuracy on Purchase100 is 56.0\% and the best direct single-query MIA accuracy is 52.8\%. The test accuracy on Texas100 is 39.1\%, and the  best direct single-query MIA accuracy is 53.8\%. Note that though DP-SGD provides a  differential privacy guarantee and the best direct single-query MIA accuracy is  0.5\% $\sim$ 1\% lower than that against our \sysname, DP-SGD suffers from a significant loss in utility: compared to the undefended model DP-SGD incurs 13.2\% $\sim$ 27.5\% drop in classification accuracy, while our defense incurs no more than 3.9\% drop in test accuracy.

\section{Related Work}
\label{sec:related}
    \textbf{Membership inference attacks against machine learning.}
    MIAs are usually studied in a black-box manner~\cite{shokri2017membership, salem2018ml, nasr2018machine}: an attacker either leverages the shadow training technique or utilizes knowledge of partial membership information of training set.
    Most MIAs are direct single-query attacks~\cite{song2019privacy, song2020systematic,yeom2020overfitting, yeom2018privacy}. A more recent line of MIA research has considered indirect multi-query attacks which leverage multiple queries around the target sample to extract additional information~\cite{long2018understanding, choo2020label, li2020label, jayaraman2020revisiting}.
    Jayaraman et al.~\cite{jayaraman2020revisiting} analyze MIA in more realistic assumptions by relaxing proportion of training set size and testing set size in the MIA set up to be any positive value instead of $1$. Hui et al.~\cite{hui2021practical} study MIA in a practical scenario, assuming no true labels of target samples are known and utilizing differential comparison for MIAs.
    Another threat model for MIAs is that of a white-box setting, i.e., attacker has full access to the model~\cite{nasr2019comprehensive, leino2020stolen}, which can exploit model parameters to infer membership information.

   \noindent \textbf{Membership inference defenses for machine learning.} Membership inference defenses can be divided into two main categories. One category of defenses are specifically designed to mitigate such attacks. 
   It has been shown that techniques to improve a model's generalization ability, including regularization~\cite{krogh1992simple} and dropout~\cite{srivastava2014dropout}, can decrease the MIA success~\cite{shokri2017membership, salem2018ml} limitedly. Several defenses~\cite{nasr2018machine, li2020membership} propose to add a specific constraint during training to mitigate the difference of model behavior on models and non-models. These optimization problems under multiple constraints in training are usually computationally hard to solve. Post-processing techniques on prediction vectors are also applied on membership inference defenses~\cite{jia2019memguard, yang2020defending}. Note that these defenses which obfuscate prediction vectors can not defend against label-only attacks~\cite{li2020label, choo2020label}. Moreover, Song et al.~\cite{song2020systematic} re-evaluate two state-of-the-art defenses~(adversarial regularization~\cite{nasr2018machine} and MemGuard~\cite{jia2019memguard}) and find that both of them underestimated metric-based attacks. Shejwalkar et al.~\cite{shejwalkar2019reconciling} propose distillation of public data to protect membership privacy. However, public dataset is not usually available in many practical scenarios. Another category of defenses use differential privacy mechanisms~\cite{dwork2006calibrating,dwork2008differential, dwork2014algorithmic}, which provides a provable privacy guarantee for users. A general framework combining deep learning and differential privacy is DP-SGD~\cite{abadi2016deep, mcmahan2017learning, wang2019subsampled}. However, machine learning with differential privacy suffers from the challenge of achieving acceptable utility loss and privacy guarantee ~\cite{jayaraman2019evaluating, rahman2018membership}. Several methods have been proposed to improve test accuracy under an acceptable $\epsilon$ guarantee,
   which is still an active area of research. Current state-of-the-art approaches still incur significant drop in test accuracy~(around 25\%) on benchmark datasets with acceptable $\epsilon \leq 3$~\cite{tramer2020differentially, papernot2020tempered, nasr2020improving}.
   
    \noindent \textbf{Other Attacks Against Machine Learning Privacy.} Fredrikson et al.~\cite{fredrikson2015model} propose model inversion attacks, which can infer missing values of an input feature from the classifier's prediction. Ganju et al.~\cite{ganju2018property} study property inference attacks aiming to infer properties of the target model's training set. Salem et al.~\cite{salem2020updates} propose the dataset reconstruction attack in the online learning setting. {Another line of works studied model extraction attacks~\cite{tramer2016stealing, he2020stealing, krishna2019thieves}, i.e., stealing the ML model's learned parameters through the prediction API. Besides model parameters, other works focus on stealing the target model’s hyperparameters~\cite{wang2018stealing}.} Recently Carlini et al.~\cite{carlini2019secret, carlini2020extracting} studied memorization and data extraction attacks on natural language processing models, which show that machine learning models suffer from severe privacy threats.

\section{Conclusions}
In this paper we introduce a new practical membership inference defense using \archname and Self-Distillation. We first split the training set into $K$ subsets to train $K$ sub-models. We ensure each training sample is not used to train $L$ sub-models, and apply an adaptive inference strategy for members and non-members. \archname will only use the average of a particular subset of $L$ sub-models which are not trained with the queried samples. Hence \archname can defend against direct single-query attacks. We apply Self-Distillation from \archname to defend against stronger attacks and avoid additional computing resources in inference. We perform rigorous MIAs including direct single-query attacks, label-only attacks and adaptive attacks to show that our defense outperforms previous defenses to achieve a better trade-off between the utility and practical membership privacy. Future work includes understanding the adaptation of Split-AI in other privacy tasks such as provable private mechanisms, analyzing the defense performance against white-box MIAs, and extending the our defense from classification models to generative models.

\bibliographystyle{plain}
\bibliography{selena}

\appendix
\section{Proof for \archname and \sysname against Direct, Single-Query Membership Inference Attack}
\label{appendix:securityproof}
\newcommand{\Supp}{\mathrm{Supp}}
\paragraph{Notation.} In this section, we use $x\gets X$ to denote that $x$ is sampled from a distribution $X$. We use $\Supp(X)$ to denote the support set of a random variable $X$. By $TV(X,X')$ we denote the total variation distance between $X$ and $X'$, that is $TV(X,X') = \sup_{S\subset \Supp(X)\cup\Supp(X')} \Pr[X\in S] - \Pr[X'\in S]$.

\subsection{\archname's Privacy under Direct Single-query Attacks}
\label{appendix:splitaidsq}
\begin{definition}[Direct, Single-Query Membership Inference]\label{secGame}
The single-query membership inference game is defined between an attacker $A$ and a learner $C$ and is parameterized by a number $n$ which is the number of training examples. 
\begin{enumerate}
    \item The attacker selects a dataset $X=\set{x_1,\dots,x_{2n}}$ and sends it to the learner.
    \item Learner selects a uniformly random Boolean vector $b=b_1,\dots, b_{2n}$ such that the Hamming weight of $b$ is exactly $n$.
    \item Learner constructs a dataset $S=\set{x_i; \forall i\in [2n], b_i=1}$ and learns a model $F_{\theta_I}$ using $S$ as training set.
    \item Learner selects a random $i\in [2n]$ and sends $(x_i, F_{\theta_I}(x_i))$ to the adversary
    \item Adversary outputs a bit $b'_i$.
\label{def:securitygame}    
\end{enumerate}
The advantage of $A$ in breaking the security game above is $\mathsf{SQMI}(A,C,n)=\mathbf{E}[1-|b_i-b_i'|]$ where the expectation is taken over the randomness of the adversary and learner.
\end{definition}

\begin{remark}
We can  define a variant of the security game of Definition \ref{secGame} for a fixed dataset $X$. That is, instead of $X$ being chosen by adversary, we define the game for a given $X$. We use $\mathsf{SQMI}(A,C,X)$ to denote the success of adversary in the security game with the dataset fixed to $X$.
\end{remark}

\begin{theorem}\label{thm:single_query_split_train}
Consider a learner $C_{ST}$ that uses Algorithm \ref{alg:A1}. For any direct, single-query membership inference adversary $A$ we have

$$\mathsf{SQMI}(A,C_{ST},n) = 50\%$$
\end{theorem}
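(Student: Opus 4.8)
The plan is to show that the distribution of the transcript $(x_i, F_{\theta_I}(x_i))$ that the adversary sees is \emph{identical} whether $b_i = 1$ (so $x_i \in S$) or $b_i = 0$ (so $x_i \notin S$). If that holds, then conditioned on the value $x_i$ that is revealed, the posterior on $b_i$ remains uniform, no adversary strategy can do better than a coin flip, and $\mathsf{SQMI}(A,C_{ST},n) = 1/2$. So the whole argument reduces to a statement about Algorithm~\ref{alg:A1}: the output $F_{\theta_I}(x_i)$ is a function of the sub-models and of the index set used to aggregate, and I claim neither the relevant sub-models nor the aggregation rule ``know'' whether $x_i$ was in $S$.

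First I would set up the right coupling. Fix the dataset $X = \{x_1,\dots,x_{2n}\}$ and fix the target index $i$; it suffices to prove the per-$X$, per-$i$ statement since $\mathsf{SQMI}(A,C_{ST},n)$ averages over these. Condition on everything except $b_i$ and the internal randomness downstream of it. The key observation is the symmetry built into the training partition: for every training point $x$, the learner draws ${Id}_{non}(x)$ — the set of $L$ sub-model indices \emph{not} trained on $x$ — and sub-model $F_j$ is trained on exactly those points $x$ with $j \notin {Id}_{non}(x)$. On the inference side, when $x_i \in S$ the answer is $\frac{1}{L}\sum_{j \in {Id}_{non}(x_i)} F_j(x_i)$, i.e. an average over sub-models that were provably \emph{not} trained on $x_i$; when $x_i \notin S$ the answer is $\frac{1}{L}\sum_{j \in {Id}_{non}(x')} F_j(x_i)$ for a uniformly random training point $x'$. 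I would argue that in the member case, ${Id}_{non}(x_i)$ is itself a uniformly random $L$-subset of $[K]$ drawn independently of the rest of the data partition (that is how the algorithm generates it), and that the $L$ sub-models indexed by it have a joint distribution that does not depend on whether $x_i$ was inserted — because those sub-models were trained on data that excluded $x_i$ regardless. So the member-case transcript is distributed as: pick a uniform $L$-subset $J \subseteq [K]$, train the corresponding sub-models on the induced subsets of $S \setminus \{x_i\}$-type data, output $\frac1L\sum_{j\in J}F_j(x_i)$. The non-member case is: pick $x'$ uniform in $S$, take $J = {Id}_{non}(x')$, do the same. I need these two distributions on $J$ (and on the resulting sub-models) to coincide.

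The main obstacle — and the step I would spend the most care on — is precisely this matching of the two ways the aggregation index set $J$ gets chosen, together with checking that the sub-models' joint law is insensitive to $x_i$'s membership. Two subtleties: (a) ${Id}_{non}(x')$ for a \emph{random training point} $x'$ is a uniform $L$-subset of $[K]$ (each point's indices are drawn i.i.d. uniformly), so marginally it matches the member case's uniform $J$ — good; but one must confirm the sub-models trained are the same mixture in both cases, i.e. that revealing $x_i$ and answering via $J$ does not correlate $J$ with whether $x_i\in S$ in a way that leaks. When $x_i \in S$, some sub-models (those with $j\notin{Id}_{non}(x_i)$) \emph{were} trained on $x_i$, but $J$ deliberately avoids exactly those, so the sub-models actually queried form the same distribution as if $x_i$ had been absent and $J$ were a fresh uniform subset. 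When $x_i\notin S$, no sub-model saw $x_i$, and $J$ is a fresh uniform subset via $x'$. (b) A technical wrinkle is the ``exact match'' test in inference: if $x_i\notin S$ but $x_i$ happens to equal some $x_j\in S$ — impossible here since the $x_\ell$ are the distinct elements of $X$ and the learner tests membership in $S$, so a non-member $x_i$ genuinely triggers the else-branch. I would state and use the formal claim that, for any fixed target point $x^*$, the random variable $F_{\theta_I}(x^*)$ produced by Algorithm~\ref{alg:A1} has a law that is a function only of $X\setminus\{x^*\}$ (more precisely, of the multiset of potential training points other than $x^*$) and of fresh independent randomness — never of the bit recording whether $x^*\in S$. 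Granting that claim, the transcript $(x_i,F_{\theta_I}(x_i))$ is independent of $b_i$ given $x_i$, hence $\mathbf{E}[1-|b_i-b_i'|] = \tfrac12$ for every adversary $A$, which is the theorem. I would close by noting this is an information-theoretic, not computational, guarantee — it holds for unbounded $A$ — which is why the advantage is \emph{exactly} $50\%$ rather than merely close to it.
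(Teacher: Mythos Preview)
Your proposal is correct and follows essentially the same approach as the paper: show that the conditional law of the response $F_{\theta_I}(x_i)$ is identical under $b_i=0$ and $b_i=1$, by arguing that in both branches the aggregation set $J$ is a uniform $L$-subset of $[K]$ and the queried sub-models never saw $x_i$.

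One point you flag as ``the main obstacle'' but leave slightly underspecified is exactly the step the paper's proof pins down, so it is worth stating explicitly. In the member case the $L$ queried sub-models are trained on subsets of $S\setminus\{x_i\}$, which has $n-1$ points. In the non-member case the queried sub-models are indexed by $J=Id_{non}(x')$, so each of them also excludes $x'$; hence they are trained on subsets of $S\setminus\{x'\}$, which again has $n-1$ points. The crucial equality of laws is then that $S\setminus\{x_i\}$ (when $b_i=1$) and $S\setminus\{x'\}$ (when $b_i=0$, with $x'$ uniform in $S$) are both uniform $(n-1)$-subsets of $X\setminus\{x_i\}$; together with $J$ being a fresh uniform $L$-subset independent of the remaining $Id_{non}$ draws in both cases, this gives the identical joint distribution of $(J,\{F_j\}_{j\in J})$ that you need. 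Your phrasing ``as if $x_i$ had been absent and $J$ were a fresh uniform subset'' is right in spirit, but without the observation that $x'$ is likewise excluded on the non-member side one could worry about an $n$ versus $n-1$ size mismatch. With that made explicit, your argument and the paper's coincide.
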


\begin{proof}[Proof] We show that for any adversary's choice of $i\in [2n]$ in step 4 of the security game, the view of adversary in two cases when $b_i=0$ and when $b_i=1$ are statistically identical. Note that the only information that the adversary receives is $r_i=F_{\theta_I}(x_i).$ We show that the distribution of two random variables $r_i \mid b_i=0$ and $r_i \mid b_i=1$ are identical. Let $U_i$ be a random variable corresponding to the subset of trained models that do not contain $x_i$ in their training set (in particular $|U_i|=L$ if $b_i=1$ and $|U_i|=K$ when $b_i=0$). Also, let $U$ denote a random variable corresponding to a subset of $L$ models that do not contain a random $x_k$ in their training data where $k$ is selected from $\set{j\in[2n]; b_j=1}$ uniformly at random.

We first note that $U\mid b_i=0$ and $U_i\mid b_i=1$ are identically distributed random variables. Specifically, they are both an ensemble of $L$ models trained on a uniformly random subset of a dataset $T\subset\set{x_1,\dots,x_{i-1},x_{i+1},\dots,x_{2n}}$ where $|T|=n-1$.

Now, lets calculate the distribution of response when $b_i=1$ and when $b_i=0$. For $b_i=1$ we have 
\begin{align*}
    (r_i \mid b_i=1) \equiv (\frac{1}{L}\cdot \sum_{F \in U_i} F(x_i) \mid b_i=1)
\end{align*}
For $b_i=0$ we have
\begin{align*}
    (r_i \mid b_i=0)\equiv (\frac{1}{L}\cdot \sum_{F \in U} F(x_i) \mid b_i=0)
\end{align*}

Now since $U_i\mid b_i=1$ and $U\mid b_i=0$ are distributed identically, the summation of the query points are also identically distributed. Therefore, $r_i \mid b_i=0$ and $r_i \mid b_i=1$ are identically distributed. Note that it is crucial that the adversary only queries the point $x_i$ as otherwise we had to take the summation over $U\mid b_i=1$ and $U\mid b_i=0$ which are not identically distributed (the case of $b_i=1$ could have $x_i$ in the training set of the $L$ models).

Since we prove that $r_i\mid b_i=1$ and $r_i\mid b_i=0$ are identical, the adversary cannot distinguish them and the success probability of the adversary is exactly $0.5$. 
The intuitive explanation for this proof is that for each data point, the distribution of output of this algorithm on a given point $x$ is independent of the presence of $x$ in the training set, as we will not use models that are trained with $x$ to answer queries, even if $x$ is in the training set.

\end{proof}
\begin{remark}[A stronger security game and theorem]
Note that there is a worst-case variant of Definition \ref{secGame} where in step 4, instead of the challenger, the adversary select $i\in [2n]$. This is a stronger security game as the adversary can select the worst example in the dataset. However, Theorem \ref{thm:single_query_split_train} remain unchanged in this game. This is because the proof applies to any $i\in[2n]$ and does not require $i$ to be chosen at random. As we will see below, we have another theorem (Theorem \ref{thm:distill}) that considers the privacy of end-to-end \sysname for which the guarantee only holds for the weaker definition. 
\label{remark:strongergame}
\end{remark}

\subsection{\sysname's Privacy under Direct Single-query Attacks}
\label{appendix:selenadsq}
\newcommand{\cM}{\mathcal{M}}
\begin{definition}[stable distillation]
A distillation algorithm $Q\colon M_s\times \mathrm{AUX}\to M_o$ is a potentially randomized algorithm with access to a source model $m_s\in M_s\subseteq Y^X$ and some auxiliary information and returns an  output model $m_o\in M_o \subset Y^X$.  
We define the notion of stability for a distillation algorithm on a point $x\in X$, and joint distribution $\cM$ on $M_s\times \mathrm{AUX}$ as follows:
 $$\mathsf{stablity}(Q,\cM,x)=1-TV(Q(\cM)[x],\cM[x]).$$
Moreover, we say the algorithm $Q$ has $(\alpha,\beta)$-stability on a distribution $\cM$ and a dataset $X$ iff
    $$\Pr_{x\gets X}[\mathsf{stability}(Q,\cM,x)\leq 1-\alpha ]\leq \beta$$
\label{def:stabledistillation}
\end{definition}

\paragraph{Example.} If the distillation algorithm $Q$ ensures that for a specific point $x$ and for all $m_s\in M_s$ we have $Q(m_s)[x] = m_s[x]$, then $Q$ has stability $1$ on point $x$ for all distributions $\cM$ defined on $M_s$.
\begin{remark}
The distillation algorithm $Q$ could also depend on an additional dataset that is correlated with $m_s$ as the auxiliary information. For instance, in our self-distillation algorithm, the distillation is done through the same training set that was used to train $m_s$. In this case, we are interested in the joint distribution $\cM$ that consist of a model $m_s$ as first element and a dataset $D$ as the second element, so that $m_s$ is a model trained on dataset $D$.
\end{remark}
Now we state a corollary of our Theorem \ref{thm:single_query_split_train} about the privacy of the distilled models from the output of the \archname operation. 
\newtheorem{corollary}[theorem]{Corollary}
\paragraph{Notation.} For a learner $C$ and a dataset $X$, we use $\cM_{C,X}$ to denote a distribution of models that is obtained from the following process: First select a random subset $S$ of size $|X|/2$ and then train a model $m$ on that subset using learner $C$ and output $(m,S)$. For a learner $C$ and a distillation model $Q$, we use $QoC$ to denote a learner that first uses $C$ to train a model and then uses distillation algorithm $Q$ to distill that model and then returns the distilled model.
\begin{theorem}\label{thm:distill}
Let $C$ be an arbitrary learner. Assume for a set of samples $X$ the distillation algorithm $Q$ has $(\alpha,\beta)$-stability on distribution $\cM_{C,X}$ and dataset $X$.
    Then, for any adversary $A$ we have
$$\mathsf{SQMI}(A,{QoC},X) \leq \mathsf{SQMI}(A,{C},X)  + \alpha + \beta.$$

\end{theorem}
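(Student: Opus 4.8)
The plan is to couple the single-query games for $QoC$ and for $C$ so that they run on the same learner randomness, and then charge the whole advantage gap to the event that self-distillation changes the prediction on the challenge point. First I would put the advantage in a form that is linear in the response distribution. For any learner $G$, since $1-|b_i-b_i'|=\mathds{1}\{b_i=b_i'\}$, we may write $\mathsf{SQMI}(A,G,X)=\mathbf{E}_{(i,x_i,r,b_i)}[h(i,x_i,r,b_i)]$, where $r$ is the response handed to $A$ and $h(i,x_i,r,b_i):=\Pr_{A}[A(i,x_i,r)=b_i]\in[0,1]$ is the coin-averaged probability that $A$ guesses $b_i$ correctly from its view $(i,x_i,r)$.

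Next I would couple the two games. Sample once: the weight-$n$ vector $b$ (hence the subset $S=\set{x_j:b_j=1}$), the challenge index $i\in[2n]$, the source model $m\gets C(S)$, and the internal coins $\rho$ of $Q$ (irrelevant in the $C$-game). In the $C$-game the response is $r_C=m[x_i]$; in the $QoC$-game it is $r_{QoC}=Q_\rho(m,S)[x_i]$. Since $b_i$ and $(i,x_i)$ are identical in the two coupled executions, and $h$ sees the response only through $A$'s behaviour on it, we get the pointwise bound $|h(i,x_i,r_{QoC},b_i)-h(i,x_i,r_C,b_i)|\le\mathds{1}\{Q_\rho(m,S)[x_i]\neq m[x_i]\}$. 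Taking expectations,
$$\mathsf{SQMI}(A,QoC,X)-\mathsf{SQMI}(A,C,X)\ \le\ \Pr\big[\,Q_\rho(m,S)[x_i]\neq m[x_i]\,\big].$$

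Now I would identify the distribution on the right-hand side with $\cM_{C,X}$. Since $i$ is uniform over $[2n]$, $x_i$ is a uniformly random $x\gets X$; since $b$ is a uniform weight-$n$ vector, $S$ is a uniformly random subset of size $|X|/2$, so $(m,S)$ is distributed exactly as $\cM_{C,X}$. Hence the bound becomes $\mathbf{E}_{x\gets X}\big[\Pr_{(m,S)\gets\cM_{C,X},\rho}[Q_\rho(m,S)[x]\neq m[x]]\big]$, and for each fixed $x$ the inner probability is the per-point instability of the distillation -- the quantity that $1-\mathsf{stability}(Q,\cM_{C,X},x)$ is meant to capture for the natural coupling of $m$ with $Q(m)$ (Definition~\ref{def:stabledistillation}). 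Splitting the outer expectation over $x$ according to whether $x$ is ``bad'' ($\mathsf{stability}(Q,\cM_{C,X},x)\le 1-\alpha$) or not: by $(\alpha,\beta)$-stability the bad set has mass at most $\beta$ and contributes at most $\beta$ (each summand is $\le 1$), while the good set contributes at most $\alpha$; summing yields $\mathsf{SQMI}(A,QoC,X)\le\mathsf{SQMI}(A,C,X)+\alpha+\beta$.

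The main obstacle is getting the coupling step exactly right: the gap must be controlled by the probability that distillation \emph{moves} the prediction at the \emph{specific} challenge point, which forces one to apply $Q$ to the realized model $m$ rather than merely compare marginal output distributions, and to average over a uniformly random challenge point. This is precisely why the guarantee only holds for the variant of Definition~\ref{secGame} in which the challenger picks $i$ (cf.\ Remark~\ref{remark:strongergame}): a single point at which distillation always flips the prediction would add $1$ to the gap, and only averaging $x\gets X$ lets the $\beta$ term absorb such points. Some care is also needed to reconcile this ``probability that distillation changes the output'' with the total-variation phrasing of Definition~\ref{def:stabledistillation}; beyond that, what remains ($h\in[0,1]$, the linearity rewrite, and the bookkeeping that $(m,S)\sim\cM_{C,X}$) is routine.
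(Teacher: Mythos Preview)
Your proposal is correct and follows essentially the same route as the paper: couple the $C$- and $QoC$-games on shared learner randomness, bound the advantage gap by the probability that distillation changes the prediction at the (uniformly chosen) challenge point, then split that probability over $\alpha$-stable versus unstable $x$ to collect $\alpha+\beta$. The reconciliation you flag between the natural-coupling disagreement probability and the total-variation formulation of stability is precisely the step the paper's own proof also makes implicitly when it writes $\Pr[QoC[x_i]\neq C[x_i]]\leq\alpha$ for stable points, and your remark on why the bound requires the challenger (not the adversary) to pick $i$ matches Remark~\ref{remark:strongergame}.
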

\begin{proof}
Consider an adversary $A$ that given a response $Qo C[x_i]$ on query $x_i\in X$ outputs a  bit $b'_i=A(Qo C(x_i))$. Let $E$ be an event defined on $X$ such that $E(x)=1$ iff $$\mathsf{stability}(Q,\cM_{C,X},x)\geq 1-\alpha.$$  

For a point $x_i$ such that $E(x_i)=1$ 
we have
\begin{align*}&\Pr\big[A(QoC[x_i])=b_i\big]\leq  \Pr\big[QoC[x_i]\neq C[x_i]\big]\\
&~~ +\Pr\big[A(C[x_i]) =b_i \mid C(x_i)=QoC[x_i]\big]\cdot \Pr\big[QoC[x_i]=C[x_i]\big]\\
&\leq  \alpha + \Pr\big[A(C[x_i])=b_i \big] \end{align*}
Therefore, we have
\begin{align*}&\Pr_{x_i\gets X}\big[A(QoC[x_i])=b_i\big]\\
&\leq \Pr_{x_i\gets X}\big[A(QoC[x_i])=b_i] \mid E(x_i)\big]\cdot \Pr_{x_i\gets X}[E(x_i)] + \Pr_{x_i\gets X}[\bar{E}(x_i)]\\
&\leq \Pr_{x_i\gets X}\big[A(QoC[x_i])=b_i \mid E(x_i)\big]\cdot \Pr_{x_i\gets X}[E(x_i)] + \beta\\
&\leq \Big(\Pr_{x_i\gets X}\big[A(C[x_i])=b_i \mid  E[x_i]\big] + \alpha\Big) \cdot \Pr_{x_i\gets X}[E(x_i)] + \beta\\
& \leq \Pr_{x_i\gets X}\big[A(C[x_i])=b_i]\big] + \alpha + \beta\\
&=\mathsf{SQMI}(A,{C},X) + \alpha + \beta.
\end{align*}

\end{proof}
Now we are ready to state a corollary of Theorems~\ref{thm:distill} and Theorem~\ref{thm:single_query_split_train} for the full pipeline of \archname followed by Self-Distillation. The following Corollary directly follows from Theorems~\ref{thm:distill} and Theorem~\ref{thm:single_query_split_train}.
\begin{corollary}\label{cor:last}
    Let $C_{ST}$ be a learner that uses the \archname algorithm \ref{alg:A1}. Also, let $Q_{SD}$ be a distiller that uses self-distillation algorithm. If $Q_{SD}$ is $(\alpha,\beta)$-stable for a dataset $X$ and distribution $\cM_{C_{ST},X}$, then, for any adversary $A$ we have
    
$$\mathsf{SQMI}(A,Q_{SD}oC_{ST},X) \leq 0.5 + \alpha + \beta.$$
\end{corollary}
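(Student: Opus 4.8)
The plan is to obtain Corollary~\ref{cor:last} as a direct composition of the two theorems already established, with essentially no new argument beyond checking that the hypotheses line up. First I would instantiate Theorem~\ref{thm:distill} with the learner $C := C_{ST}$ (the \archname learner of Algorithm~\ref{alg:A1}) and the distiller $Q := Q_{SD}$ (self-distillation). The hypothesis required by Theorem~\ref{thm:distill} is precisely that $Q_{SD}$ has $(\alpha,\beta)$-stability on the distribution $\cM_{C_{ST},X}$ and the dataset $X$, which is exactly what the corollary assumes; here one should note, as the remark following Definition~\ref{def:stabledistillation} points out, that $\cM_{C_{ST},X}$ is a joint distribution over (sub-model ensemble, training subset), so that self-distillation --- which uses that same training subset as its auxiliary input --- does fit the distillation template $Q\colon M_s\times\mathrm{AUX}\to M_o$. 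This step yields
$$\mathsf{SQMI}(A,Q_{SD}oC_{ST},X)\ \le\ \mathsf{SQMI}(A,C_{ST},X)+\alpha+\beta .$$

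Next I would bound the term $\mathsf{SQMI}(A,C_{ST},X)$ by $0.5$. Theorem~\ref{thm:single_query_split_train} is stated as $\mathsf{SQMI}(A,C_{ST},n)=50\%$ for the game of Definition~\ref{secGame}, whereas here we need the fixed-dataset variant $\mathsf{SQMI}(A,C_{ST},X)$. However, as observed in Remark~\ref{remark:strongergame}, the proof of Theorem~\ref{thm:single_query_split_train} actually shows that for every index $i$ the two conditional response distributions $r_i\mid b_i=0$ and $r_i\mid b_i=1$ are identical, and this argument never uses that $X$ was chosen by the adversary (nor that $i$ is random); hence it applies verbatim to an arbitrary fixed $X$, giving $\mathsf{SQMI}(A,C_{ST},X)=0.5$ for all $X$. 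Substituting this into the inequality above gives
$$\mathsf{SQMI}(A,Q_{SD}oC_{ST},X)\ \le\ 0.5+\alpha+\beta ,$$
which is the claimed bound.

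The proof therefore has no real mathematical obstacle; the only thing to be careful about is the bookkeeping of which security game each statement refers to. Concretely, I expect the one point worth spelling out is the mismatch between the $n$-parameterized game (with challenger-chosen $i$) used in Theorem~\ref{thm:single_query_split_train} and the fixed-$X$ game used in Theorem~\ref{thm:distill} and in the corollary; resolving it amounts to invoking the stronger, per-index form of the \archname privacy statement noted in Remark~\ref{remark:strongergame}. Everything else --- in particular that self-distillation is an admissible distillation algorithm whose auxiliary information is the training set, and that the stability parameters $\alpha,\beta$ transfer unchanged through the composition --- is immediate from the definitions.
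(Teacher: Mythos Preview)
Your proposal is correct and matches the paper's approach exactly: the paper states that the corollary ``directly follows from Theorems~\ref{thm:distill} and Theorem~\ref{thm:single_query_split_train}'' without further elaboration, and your plan is precisely to compose those two results. Your additional care in reconciling the $n$-parameterized game of Theorem~\ref{thm:single_query_split_train} with the fixed-$X$ game via Remark~\ref{remark:strongergame} is a detail the paper leaves implicit but which you have handled correctly.
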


\subsection{Discussion of \archname and \sysname for Correlated Points}

\label{appendix:correlationdiscuss}

\begin{remark}[How private is \sysname against multi-query attacks?]
The above theoretical analysis of \sysname is only valid for single-query direct attacks. But one might wonder if we can show a similar theory for privacy of \sysname against multi-query attacks. Unfortunately, we cannot prove a result as general as Corollary \ref{cor:last} for multi-query attacks. In fact, there exist some datasets that \sysname cannot obtain provable privacy for. For instance, imagine a dataset that contains two points $(x,0)$ and $(x',1)$ such that $x$ and $x'$ are almost the same points, i.e. $x\approx x'$, yet they are labeled differently in the training set ($x$ is labeled as 0 and $x'$ as 1). In this scenario, we can observe that the adversary can obtain information about membership of $x$ and $x'$, when querying both points. In particular, if only one of $x$ and $x'$ are selected as members, then we expect the result of query on $x$ and $x'$ to be the same and equal to the label of the one that is selected as a member. However, we argue that this lack of privacy for certain datasets will not manifest in the real world examples as such high correlation does not frequently appear in real-world datasets. Our empirical analysis of \sysname is consistent with this claim. We defer the theoretical analysis of \sysname for multi-query attacks on datasets that satisfy certain assumptions to future work.
\label{remark}
\end{remark}

\paragraph{Specific study of possible leakage in Remark 7.} To study the possible leakage in Remark~\ref{remark} on \archname, we investigate the effect of querying correlated points. In particular, we consider pairs $(x, x')$, where $x$ is a member and $x'$ is a close non-member. Then, we measure the difference between outputs from $L$ sub-models in $Id_{non}(x)$ and random $L$ sub-models for a non-member sample $x'$. This way, we obtain an attack which shows the magnitude of the privacy loss due to the leakage described in Remark~\ref{remark}.

\paragraph{Experiment setup.} We design the following experiment on the CIFAR100 dataset. We use $L_2$ distance to measure the correlation between member samples and non-member samples. For each training sample $x$, we find the sample $x'$ among test set which has the least $L_2$ distance to $x$ but labeled differently. For each  correlated pair $(x, x')$, we query \archname on $x'$ twice, the first query uses $L$ sub-model indices defined by $Id_{non}(x)$ and the second query uses random $L$ sub-models. We denote these two queries by $F_{\theta_\mathrm{I}}(x', Id_{non}(x))$ and $F_{\theta_\mathrm{I}}(x', rnd)$ respectively. Now we can leverage the MIAs evaluated in Section~\ref{sec: eval}: consider $F_{\theta_\mathrm{I}}(x', Id_{non}(x))$ as a member and $F_{\theta_\mathrm{I}}(x', rnd)$ as a non-member, we use these predictions along with the label of $x'$ as input to the direct-single query attacks~(due to their strong performance on undefended models).

\begin{figure}[ht]
\centering
\includegraphics[width=3in]{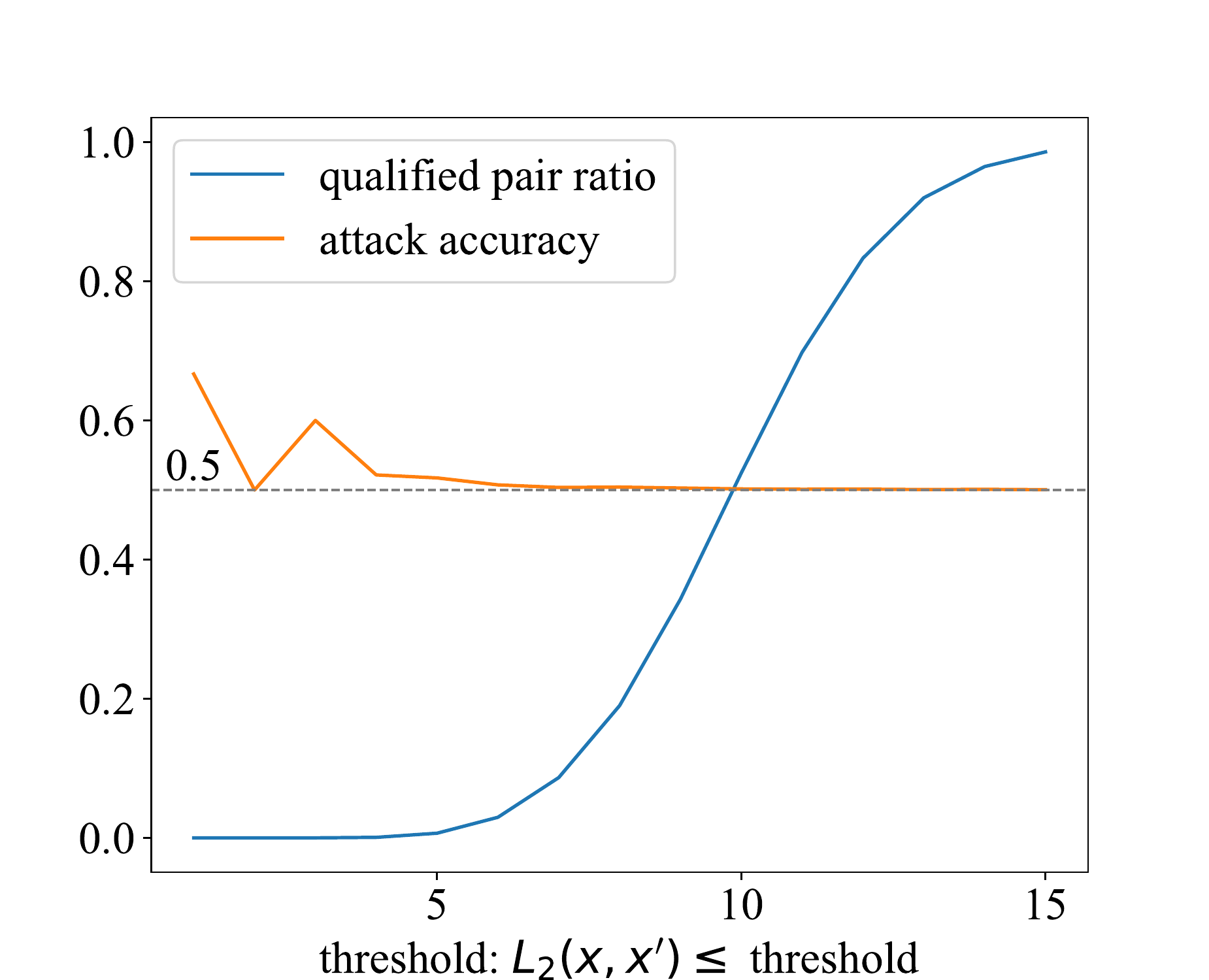}
\caption{Given the $L_2$ distance threshold for correlated pairs $(x, x')$ as the x-axis, we plot the fraction of pairs with distance less than that threshold. We also plot the average attack accuracy among paired queries for that distance.}
\label{fig:ratioandattack}
\end{figure}

\paragraph{Result.} We present the result of the correlated point attack as a function of how close these correlated pairs are, i.e., the distance $L_2(x, x')$. For $L_2$ distance from 1 to 15,\footnote{Image pixel in range [0,1].} we evaluate the ratio of member samples that satisfy this $L_2$ restriction and the attack success rate and plot the result in Figure~\ref{fig:ratioandattack}. We can see that for $L_2$ distance larger than 6, the attack performance is close to a random guess. For $L_2$ distance less than 6, we can see that as $L_2$ distance restriction becomes smaller, the attack accuracy tends to increase. This is consistent with the what we discuss in Remark~\ref{remark}. However, we should also note that the ratio of such pairs that satisfy the restriction is close to 0. Specifically, for $L_2=1$. there are only 6 member samples out of 50000 member samples that satisfy this restriction, which is consistent with our discussion in Remark~\ref{remark} that the presence of such highly correlated pairs in real-world datasets is small.

\paragraph{Can our NN-based attacks (in Section~\ref{sec:attackdefense} and Section~\ref{sec:adaptiveattack}) leverage the correlation leakage?} We emphasize that our NN-based attacks described in Section \ref{sec:attackdefense} and Section~\ref{sec:adaptiveattack} have all the required information for leveraging the correlation leakage described in this subsection. Our attacks have access to a large fraction of dataset together with their membership information and the prediction vector on the target model. Therefore, in principle, the NN-based attack could learn to perform the following: 1) On a given point $x$ find the most correlated point $x'$ in the provided dataset 2) calculate the expected prediction vector for querying $x'$ on models and non-models of $x$.
3) Run the attack described above in the subsection. We cannot prove that the neural network does all these steps, but it has all the power to do so.

\section{Evaluation of Our Architecture's Two Components}
\label{appendix:components}
In Section~\ref{sec: eval}, we have evaluated the end-to-end performance of our framework. Since our framework is a composition of two components, we next evaluate these components individually to study their properties. We first analyze the utility and defense performance of the \archname in isolation, showing that the component achieves test accuracy similar to conventional models while mitigating direct single-query attacks. Second, we demonstrate the necessity of adaptive inference in \archname design, and show that changing our design choices with a baseline approach results in sub-optimal performance. Third, we demonstrate the necessity of the Self-Distillation component, by showing that the \archname component alone is vulnerable against indirect attacks but combined framework provides a strong defense.

    \subsection{Performance of \archname}
    \label{appendix:splitai}
    \textbf{\archname $F_{\theta_\mathrm{I}}$ has similar test accuracy compared with the undefended model and reduces the accuracy of direct single-query attack close to a random guess}. As we discussed via a proof of privacy based on a security game in Appendix~\ref{appendix:securityproof},  \archname alone should mitigate direct single-query attacks (reducing the attack accuracy close to a random guess). Here we experimentally evaluate this property of  $F_{\theta_{\mathrm{I}}}$. Table~\ref{tab:SplitEnsembleF} shows that \archname mechanism successfully mitigates the direct single-query attacks discussed in Section~\ref{subsec: attack} with approximately $50\%$ attack accuracy that is close to a random guess. From a utility perspective, for the worst case, i.e., Purchase100, the test accuracy of $F_{\theta_{\mathrm{I}}}$ is just 0.2\% lower than that the undefended model, which is negligible. For Texas100, the test accuracy of $F_{\theta_{\mathrm{I}}}$ is even $3.5\%$ higher as the ensemble benefits from the average of $L$ models. 

    \begin{table}[ht]
        \caption{Comparison of \archname and undefended model against direct single-query attacks.}
        \label{tab:SplitEnsembleF}
        \centering
        \begin{tabular}{ccccc}
            \toprule
            dataset &defense  &\tabincell{c}{acc on \\training\\ set} &\tabincell{c}{acc on\\test set} & \tabincell{c}{best\\attack}\\
            \midrule
            \multirow{2}{*}{\tabincell{c}{Purchase100}} &No &99.98\% &83.2\% &67.3\% \\
            &$F_{\theta_\mathrm{I}}$ &82.6\% &83.0\%  &{50.3\%}\\
            \midrule
            \multirow{2}{*}{Texas100} &No &79.3\% &52.3\% &66.0\%\\
            &$F_{\theta_\mathrm{I}}$ &56.1\% & 55.9\% &50.7\%\\
            \midrule
            \multirow{2}{*}{\tabincell{c}{CIFAR100}}   &No &99.98\% &77.0\% &74.8\%\\
            &$F_{\theta_\mathrm{I}}$&77.9\% &77.7\% &50.8\% \\
            \bottomrule
        \end{tabular}
    \end{table}

    \subsection{Necessity of Adaptive Inference in \archname}
    \label{subsub: adapen}
    Next, we evaluate the necessity of  our adaptive inference strategy in \archname. To do so, we make a design change in the ensemble that represents a baseline approach for comparison: we apply the strategy of averaging the outputs of all $K$ models for all input samples. We evaluate this choice on the three datasets against the direct single-query attack with $K=25$, $L=10$.

    \begin{table}[ht]
    \caption{Comparison of adaptive inference~(AI) and average of all outputs~(AOAO) strategy against direct single-query attack.}
    \label{tab:adaptiveensemble}
    \centering
    \begin{tabular}{ccccc}
        \toprule
        dataset&ensemble & \tabincell{c}{acc on \\training \\set} & \tabincell{c}{acc on \\test set}& \tabincell{c}{best \\attack}\\
        \midrule
        \multirow{2}{*}{\tabincell{c}{Purchase\\100}} & AI &82.6\% & 83.0\% &50.3\%\\
        & AOAO &99.9\% &83.4\% &62.0\%\\
        \midrule
        \multirow{2}{*}{Texas100}&AI &56.1\% &55.9\% &50.7\%\\
        &AOAO &81.9\% &56.6\% &67.7\%\\
        \midrule
        \multirow{2}{*}{CIFAR100} &AI &77.9\% &77.7\% &50.8\%\\
        &AOAO &99.98\% &78.1\% &69.2\%\\
        \bottomrule
    \end{tabular}
    \end{table}    
    
    Table~\ref{tab:adaptiveensemble} presents the comparison of adaptive inference with the baseline strategy of averaging all sub-model outputs. \textbf{While the adaptive inference approach reduces the direct single-query attack to a random guess, the behavior for average of all outputs on members and non-members is very different.} For example, a generalization gap still exists in average of all outputs: 16.5\% on Purchase100, 25.3\% on Texas100 and 21.88\% on CIFAR100. The best attack accuracy against average of all outputs is higher than 60.0\% on all three datasets, which still indicates a severe membership inference threat.
    
    The adaptive inference is needed in \archname to achieve a good defense against direct single-query attack. Since Self-Distillation needs to transfer knowledge from a source model which has a good defensive abilities, the whole \archname is a key component in our whole system.

    \subsection{Necessity of Self-Distillation}
    \label{appendix_subsec:necessityofdistill}
    We have stated the need to introduce Self-Distillation as a second component to overcome weaknesses of \archname $F_{\theta_\mathrm{I}}$ in Section~\ref{subsec:sd}. 
    We now demonstrate this by showing the potential membership inference risks in $F_{\theta_\mathrm{I}}$ are mitigated by our final protected model from Self-Distillation $F_{\theta_{\mathrm{II}}}$. 
    Towards this end, we now focus on indirect single-query attack, in which the attacker adds noise to the target sample and queries the noisy sample. We generate the noisy sample by randomly flipping one feature for binary inputs and randomly increasing or decreasing one pixel by 1 for images. Results are presented in Table~\ref{tab:noiseindirect}.

    \begin{table}[ht]
    \caption{Comparison for \archname $F_{\theta_\mathrm{I}}$ and \sysname $F_{\theta_{\mathrm{II}}}$ against indirect single-query attack.}
    \centering
    \label{tab:noiseindirect}
        \begin{tabular}{cccccc}
        \toprule
        dataset &model &\tabincell{c}{noisy\\ data?}  &\tabincell{c}{acc on \\training\\ set} &\tabincell{c}{acc on \\test set} &  \tabincell{c}{best \\ attack}\\
        \midrule
        \multirow{4}{*}{\tabincell{c}{Purchase\\100}} &\multirow{2}{*}{$F_{\theta_\mathrm{I}}$} &no &82.6\% &83.0\%  & 50.3\%\\
         & &yes &99.1\% & 83.0\% & 60.8\% \\
        &\multirow{2}{*}{$F_{\theta_{\mathrm{II}}}$}&no &82.7\% &79.3\%  & {53.3\%}\\
        &  &yes &82.2\% & 79.3\% & 53.1\% \\
        \midrule
        \multirow{4}{*}{\tabincell{c}{Texas100}} &\multirow{2}{*}{$F_{\theta_\mathrm{I}}$} &no  &56.1\% & 55.9\% &{50.7\%}\\
         & & yes & 79.3\% & 55.9\% &64.1\% \\
        &\multirow{2}{*}{$F_{\theta_{\mathrm{II}}}$} &no  &58.8\% & 52.6\% & {54.8\%} \\
         & & yes & 57.9\% & 52.6\% &53.2\%\\
        \midrule
        \multirow{4}{*}{\tabincell{c}{CIFAR100}} &\multirow{2}{*}{$F_{\theta_\mathrm{I}}$} &no &77.9\% &77.7\% &50.8\% \\
         & & yes &99.6\% &78.3\% &66.0\%\\
        &\multirow{2}{*}{$F_{\theta_{\mathrm{II}}}$} &no &78.1\% &74.6\% &55.1\% \\
        & & yes & 78.1\% &74.6\% &55.0\% \\
        \bottomrule
        \end{tabular}
    \end{table}

    While the indirect single-query attack success against $F_{\theta_{\mathrm{I}}}$ is high: 60.8\% for Purchase100, 64.1\% for Texas100 and 66.0\% for CIFAR100, the membership privacy for $F_{\theta_{\mathrm{II}}}$ does not degrade using such indirect attacks. 
    
    Recall that Self-Distillation is done by using only one exact query for each training sample and applying conventional training with resulting soft labels to train the protected model $F_{\theta_{\mathrm{II}}}$. Thus \textbf{the MIA success against the protected model $F_{\theta_{\mathrm{II}}}$ (from Self-Distillation) using noisy data is not higher than using clean data}. Therefore, Self-Distillation solves a key privacy challenge faced by \archname and is a necessary component in our framework.
    Furthermore, the Self-Distillation approach also solves the issues of computational overhead in the inference stage and the replay attack as discussed in Section~\ref{subsec:sd}.

    \section{Label-Only Attacks}
    \label{appendix:label}
    We analyze boundary attack for all datasets and data augmentation attack for CIFAR100.

    \emph{Boundary attack:} we use CW white-box attack~\cite{carlini2017towards} for computer vision dataset (CIFAR100 in our paper) as other black-box attacks need to make multiple queries on the target model and can not achieve better performance than CW white-box attacks, as shown by Choo et. al.~\cite{choo2020label}. We have 
    $$I_{\text{CW}}(F, \textbf{x}, y)= \mathds{1}\{adv-{dist}_{CW}(\textbf{x}) \geq {\tau_{(y)}}\}$$
    For other binary datasets considered in our work, the CW attack is not very successful due to the binary features. The only possible feature values are $0/1$, thus successful adversarial examples require turning feature value of 1 to be lower than 0.5 or turning feature value of 0 to be higher than 0.5, which is a big jump, otherwise the rounded noisy sample is likely to be the same as the target sample. Instead, we introduce noise in the target sample by randomly flipping a threshold number of features~\cite{choo2020label, li2020label}. Given a threshold on the number of flipped features, we generate hundreds of noisy samples for each target sample to query the model. We then perform an attack based on the percentage of correct predictions on the noisy samples to estimate the boundary. This is based on the intuition that for samples which are away from the classification boundary, the samples around it are likely to be correctly classified. Hence the metric of correctness percentage on noisy samples can be used to estimate the boundary distance. We vary the number of flipped features from 1 to 30 for Purchase100 and from 1 to 100 for Texas100~(
    We find that our selected parameters already provide a search space large enough for the optimal threshold because continuing to  increase the threshold will have lower attack accuracy as the members become more noisy). 

    We report the best attack accuracy among these numbers.
    $$I_{\text{random-noise}}(F, \textbf{x}, y)= \mathds{1}\{\frac{\sum_{\textbf{x}' \text{around \textbf{x}}} corr(\textbf{x}')}{|\textbf{x}' \text{around \textbf{x}}|} \geq {\tau_{(y)}}\}$$
    
    \emph{Data Augmentation Attack:} Data augmentation attack is based the augmentation technique that we use to train the model. During training, we first perform image padding and cropping, and then perform horizontal flipping with a $0.5$ probability to augment the training set. An attacker will similarly query all possible augmented results of a target image sample. For example, if the padding size is 4 for left/right and up/down, and the size of cropped image is the same as original image: considering left/right after cropping, there are $(4+4+1)$ possible choices; considering up/down after cropping,  there are $(4+4+1)$ possible choices; considering horizontal flipping, there are 2 possible choices. Therefore, the number of total queries for a target image is $9\times9\times2 = 162$. As the target model is more likely to correctly classify the augmented samples of members than that of non-members, only target samples with sufficient correctly classified queries will be identified as members. This attack is important as Choo et al.~\cite{choo2020label} show that data augmentation attacks~(specifically those based on image translation) can achieve higher performance than CW attacks. We have 
    $$I_{\text{data-augmentation}}(F, \textbf{x}, y)= \mathds{1}\{\frac{ \sum_{\textbf{x}': \text{augmented \textbf{x}}} corr(\textbf{x}')}{|\textbf{x}': \text{augmented \textbf{x}}|} \geq {\tau_{(y)}}\}$$

    \begin{table*}[ht]
    \caption{Ablation study on architectures for Purchase100. The first column describes model architecture in format of (activation function, width, depth). AdvReg refers to adversarial regularization. The last column is the highest attack accuracy for each row, i.e. for a specific defense on one dataset, the highest attack accuracy that MIAs can achieve, which gives an overview of comparison: the lower the best attack accuracy, lower the membership inference threat. For each dataset, the defense which has the lowest corresponding attack accuracy is bold in the column of best direct single-query attack, best label-only and best attack.}
    \label{tab:ablationpurchase}
    \centering
    \begin{tabular}{cccccccc}
        \toprule
        \tabincell{c}{architectures\\
        (activation function, \\width, depth)}&defense &\tabincell{c}{acc on \\training set} &\tabincell{c}{acc on \\test set} &\tabincell{c}{best direct \\single-query\\ attack}&\tabincell{c}{best\\label-only \\attack}&\tabincell{c}{best adaptive \\attack} &\tabincell{c}{best attack}\\
        \midrule
        \multirow{4}{*}{Tanh, 1, 4} &None &99.98\% &83.2\% &{67.3\%} &65.8\%  &N/A &67.3\%\\
        &MemGuard &99.98\% &83.2\% &58.7\% &{65.8\%} &N/A &65.8\%\\
        &{AdvReg} &91.9\% &78.5\% &57.3\% &{57.4\%} &N/A &57.4\%\\
        &\textbf{\sysname} &82.7\% &79.3\% &\textbf{53.3\%} &\textbf{53.2\%} &{54.3\%} &\textbf{54.3\%}\\
        \midrule
        \multirow{4}{*}{Tanh, 1, 3} &None &100.0\% &84.9\% &{68.2\%} &65.9\% &N/A &68.2\%\\
        &MemGuard &100.0\% &84.9\% &57.6\% &{65.9\%} &N/A &65.9\%\\
        &{AdvReg} &89.2\% &78.2\% &{56.6\%} &56.5\% &N/A &56.6\%\\
        &\textbf{\sysname} &83.9\% &81.1\% &\textbf{52.5\%} &\textbf{52.6\%} &{53.4\%} &\textbf{53.4\%}\\
        \midrule
        \multirow{4}{*}{Tanh, 1, 5} &None &99.8\% &81.4\% &{66.7\%} &65.7\% &N/A &66.7\%\\
        &MemGuard &99.8\% &81.4\% &59.5\% &{65.7\%} &N/A &65.7\% \\
        &{AdvReg} &91.7\% &77.3\% &58.2\% &{58.4\%} &N/A &58.4\%\\
        &\textbf{\sysname} &82.6\% &78.8\% &\textbf{54.5\%} &\textbf{54.9\%} & {56.2\%}  &\textbf{56.2\%}\\ 
        \midrule
        \multirow{4}{*}{Tanh, 0.5, 4} &None &99.9\% &79.9\% &{67.9\%} &66.7\% &N/A &67.9\%\\
        &MemGuard &99.9\% &79.9\% &60.2\% &{66.7\%} &N/A &66.7\%\\
        &{AdvReg} &92.8\% &77.6\% &58.8\% &{58.9\%}&N/A &58.9\%\\
        &\textbf{\sysname}  &82.5\% &77.8\% &\textbf{53.7\%} &\textbf{53.6\%} &{55.0\%} &\textbf{55.0\%}\\
        \midrule
        \multirow{4}{*}{Tanh, 2, 4} &None &100.0\% &84.4\% &{70.7\%} &67.6\% &N/A &{70.7\%}\\
        &MemGuard &100.0\% &84.4\% &58.7\% &{67.6\%} &N/A &67.6\%\\
        &{AdvReg} &90.7\% &77.6\% &57.1\% &{57.2\%}&N/A &57.2\%\\
        &\textbf{\sysname} &83.6\% &80.5\% &\textbf{54.5\%} &\textbf{54.9\%} &{56.0\%} &\textbf{56.0\%}\\
        \midrule
        \multirow{4}{*}{ReLU, 1, 4} &None &99.2\% &79.7\% &{63.6\%} &63.3\% &N/A &63.6\%\\
        &MemGuard &99.2\% &79.7\% &59.4\% &{63.3\%} &N/A &63.3\%\\
        &{AdvReg} &92.4\% &76.8\% &58.4\% &{58.6\%} &N/A &58.6\%\\
        &\textbf{\sysname} &82.5\% &77.7\% &\textbf{{54.3\%}} &\textbf{53.9\%} &53.7\% &\textbf{54.3\%}\\
    \bottomrule
    \end{tabular}
\end{table*}

\section{Experiment Setup}
Here we introduce the datasets,  the model architectures, and the hyper-parameter settings in more detail. 
\label{appendix:setup}
\subsection{Dataset}
We use three benchmark datasets widely used in prior works on MIAs:

\textbf{CIFAR100}: This is a benchmark dataset used to evaluate image classification algorithms~\cite{krizhevsky2009learning}. CIFAR100 is composed of $32\times 32$ color images in 100 classes, with 600 images per class. For each class label, 500 images are used as training samples, and remaining 100 images are used as test samples.

    \textbf{Purchase100}: This dataset is based on Kaggle's Acquire Valued Shopper Challenge,\footnote{\href{https://www.kaggle.com/c/acquire-valued-shoppers-challenge}{https://www.kaggle.com/c/acquire-valued-shoppers-challenge}.} which contains shopping records of several thousand individuals. We obtained a prepossessed and simplified version provided by Shokri et al.~\cite{shokri2017membership}. This dataset is composed of 197,324 data samples with 600 binary features. Each feature corresponds to a product and represents whether the individual has purchased it or not. This dataset is clustered into 100 classes corresponding to purchase styles.

    \textbf{Texas100}: This dataset is based on the Hospital Discharge Data public use files with information about inpatients stays in several health facilities released by the Texas
    Department of State Health Services from 2006 to 2009.\footnote{\href{https://www.dshs.texas.gov/THCIC/Hospitals/Download.shtm}{https://www.dshs.texas.gov/THCIC/Hospitals/Download.shtm}.} Each data record contains external causes of injury, the diagnosis, the procedures the patient underwent and some generic information. We obtain a prepossessed and simplified version of this dataset provided by Shokri et al.~\cite{shokri2017membership}, which is composed of 67,330 data samples with 6,170 binary features. This dataset is used to classify 100 most frequent used procedures.

    \subsection{Target Models}
    \label{sub: target}
    For CIFAR100, we use ResNet-18~\cite{he2016deep}, which is a benchmark machine learning model widely used in computer vision tasks. We adopt the cross-entropy loss function and use Stochastic Gradient Descent~(SGD) to learn the model parameters. We train the model for 200 epochs with batch size of 256, initializing learning rate 0.1 with weight decay 0.0005 and Nesterov momentum of 0.9 and divide the learning rate by 5 at epoch 60, 120, 160.\footnote{\href{https://github.com/weiaicunzai/pytorch-cifar100}{https://github.com/weiaicunzai/pytorch-cifar100}.}
    
    For Purchase100 and Texas100, we follow previous work~\cite{nasr2018machine} to use a 4-layer fully connected neural network with layer sizes $[1024,512,$ $256,100]$ and Tanh as the activation function. We use the cross-entropy loss function and Adam~\cite{kingma2014adam} optimizer to train the model on Purchase100 for 30 epochs and on Texas100 for 20 epochs with learning rate of 0.001. The batch size is 512 for Purchase100 and 128 for Texas100. 
    
    These hyper-parameter settings are also used for our ablation study in Appendix~\ref{appendix:architectures} where we vary the model architecture.

\begin{table*}[ht]
    \caption{Ablation study on architectures for Texas100. The first column describes model architecture in format of (activation function, width, depth). AdvReg refers to adversarial regularization. The last column is the highest attack accuracy for each row, i.e. for a specific defense on one dataset, the highest attack accuracy that MIAs can achieve, which gives an overview of comparison: the lower the best attack accuracy, lower the membership inference threat. For each dataset, the defense which has the lowest corresponding attack accuracy is bold in the column of best direct single-query attack, best label-only and best attack.}
    \label{tab:ablationtexas}
    \centering
    \begin{tabular}{cccccccc}
        \toprule
        \tabincell{c}{architectures\\
        (activation function, \\width, depth)}&defense &\tabincell{c}{acc on\\ training set} &\tabincell{c}{acc on\\ test set} &\tabincell{c}{best direct \\single-query\\ attack}&\tabincell{c}{best\\label-only \\attack}&\tabincell{c}{best adaptive \\attack} &\tabincell{c}{best attack}\\
        \midrule
        \multirow{4}{*}{Tanh, 1, 4} &None &79.3\% &52.3\% &{66.0\%} &64.7\% &N/A &66.0\% \\
        &MemGuard &79.3\% &52.3\% &63.0\% &{64.7\%} &N/A &64.7\% \\
        &{AdvReg} &55.8\% &45.6\% &{60.5\%} &56.6\% &N/A &60.5\%\\
        &\textbf{\sysname} &58.8\% &52.6\% &\textbf{54.8\%} &\textbf{55.1\%}&54.9\% &\textbf{55.1\%}\\
        \midrule
        \multirow{4}{*}{Tanh, 1, 3} &None &82.1\% &55.5\% &{66.2\%} &65.5\% &N/A  &66.2\%\\
        &MemGuard &82.1\% &55.5\% &63.4\% &{65.5\%}&N/A &65.5\%\\
        &{AdvReg} &54.9\% &47.0\% &{58.6\%} &55.5\%  &N/A &58.6\%\\
        &\textbf{\sysname} &61.1\% &55.4\% &\textbf{54.5\%} &\textbf{54.6\%} &{55.6\%} &\textbf{55.6\%}\\
        \midrule
        \multirow{4}{*}{Tanh, 1, 5} &None &76.5\% &49.0\% &{66.4\%} &65.2\% &N/A &66.4\%\\
        &MemGuard &76.5\% &49.0\% &63.9\% &{65.2\%} &N/A &65.2\%\\
        &{AdvReg} &54.4\% &43.4\% &{61.5\%} &56.9\% &N/A &61.5\%\\
        &\textbf{\sysname} &56.7\% &51.3\% &\textbf{54.3\%} &\textbf{53.9\%} &52.9\% &\textbf{54.3\%}\\        
        \midrule
        \multirow{4}{*}{Tanh, 0.5, 4} &None &76.5\% &52.1\% &{65.9\%}&63.2\%  &N/A &65.9\%\\
        &MemGuard &76.5\% &52.1\% &63.0\% &{63.2\%} &N/A &63.2\%\\
        &{AdvReg}  &57.1\% &45.4\% &{62.2\%}&57.1\% &N/A &62.2\%\\
        &\textbf{\sysname} &57.8\% &53.1\% &\textbf{53.7\%} &\textbf{54.0\%}&{54.7\%} &\textbf{54.7\%}\\
        \midrule
        \multirow{4}{*}{Tanh, 2, 4} &None &81.7\% &51.9\% &{67.7\%} &67.0\% &N/A &67.7\%\\
        &MemGuard &81.7\% &51.9\% &64.7\% &{67.0\%} &N/A &67.0\%\\
        &{AdvReg} &52.6\% &44.9\% &{57.9\%}&54.7\% &N/A &57.9\%\\
        &\textbf{\sysname} & 59.7\% &53.6\% &\textbf{54.4\%} &\textbf{54.7\%} &53.6\%  &\textbf{54.7\%}\\
        \midrule
        \multirow{4}{*}{ReLU, 1, 4} &None &98.8\% &47.0\% &{81.7\%}&80.7\% &N/A &81.7\%\\
        &MemGuard &98.8\% &47.0\% &75.8\% &{80.7\%} &N/A &80.7\%\\        
        &{AdvReg} &55.0\% &43.0\% &{59.0\%} &57.0\% &N/A &59.0\%\\
        &\textbf{\sysname} &54.6\% &51.4\% &\textbf{57.6\%}&\textbf{54.4\%} &55.7\%  &\textbf{57.6\%}\\
    \bottomrule
    \end{tabular}
\end{table*}

\section{Ablation Studies}
\label{appendix:ablation}

In this section, we first report on ablation studies that vary the model architecture to demonstrate that the benefits of \sysname hold across architectures. Second, we report on ablation studies that vary our parameters $K$ and $L$ and discuss parameter trade-offs and selection. 

\subsection{Ablation Study on Different Model Architectures}
\label{appendix:architectures}
For Purchase100 and Texas100, the target classifier is a 4-layer fully connected neural network. We test two additional neural network depths by deleting the last hidden layer (depth=3) or adding one more hidden layer with 2048 neurons (depth=5). We test two additional neural network widths by halving the numbers of hidden neurons (width=0.5) or doubling the numbers of hidden neurons (width=2.0). We also test both ReLU and Tanh, as the activation functions. For CIFAR100, we apply  \sysname on two different architectures: ResNet-18~\cite{he2016deep} and VGG-16~\cite{simonyan2014very}. Note that we will optimize the choice of $K$ and $L$ for different model architectures to achieve the best trade-off between test accuracy and membership privacy.

\begin{table*}[ht]
    \caption{Ablation study on architectures for CIFAR100. AdvReg refers to adversarial regularization. The last column is the highest attack accuracy for each row, i.e. for a specific defense on one dataset, the highest attack accuracy that MIAs can achieve, which gives an overview of comparison: the lower the best attack accuracy, lower the membership inference threat. For each dataset, the defense which has the lowest corresponding attack accuracy is bold in the column of best direct single-query attack, best label-only and best attack.}
    \label{tab:ablationcifar}
    \centering
    \begin{tabular}{cccccccc}
        \toprule
        architectures&defense &\tabincell{c}{acc on \\training set} &\tabincell{c}{acc on\\ test set} &\tabincell{c}{best direct \\single-query\\ attack}&\tabincell{c}{best\\label-only \\attack}&\tabincell{c}{best adaptive \\attack} &best attack\\
        \midrule
        \multirow{4}{*}{ResNet-18} &None &99.98\% &77.0\% &{74.8\%} &69.9\% &N/A &74.8\% \\
        &MemGuard &99.98\% &77.0\% &68.7\% &{69.9\%} &N/A &69.9\%\\
        &{AdvReg} &86.9\% &71.5\% &58.6\% &{59.0\%}&N/A &59.0\%\\
        &\textbf{\sysname} &78.1\%  &74.6\% &\textbf{55.1\%}&\textbf{54.0\%} &{58.3\%} &\textbf{58.3\%}\\
        \midrule
        \multirow{5}{*}{VGG-16} &None &99.97\% &74.3\% &71.1\% &{73.6\%}&N/A &73.6\%\\
        &MemGuard &99.97\% &74.3\% &64.8\% &{73.6\%} &N/A &73.6\%\\
        &{AdvReg} &95.4\% &70.3\% &64.9\% &{67.2\%} &N/A &67.2\%\\
        &\textbf{\sysname} &75.3\% &71.1\% &\textbf{54.7\%}&\textbf{55.5\%} & {57.3\%} &\textbf{57.3\%}\\        
    \bottomrule
    \end{tabular}
\end{table*}

    \begin{table*}[ht]
    \caption{\archname and \sysname against direct single-query attack for $K$ = 25 on Purchase100.}
    \label{varyL}
    \centering
    \begin{tabular}{ccccccccccccc}
        \toprule
        $L$ & 5 & 6 &7 &8 &9 &10 &11 &12 &13 & 14 &15\\
        \midrule
        \tabincell{c}{\archname  $F_{\theta_{\mathrm{I}}}$ single \\model accuracy \\on test set} &81.3\% &80.8\% &80.1\% &79.5\% &78.7\% &77.9\% &77.1\% &75.7\% &75.2\% &73.8\% &72.6\%\\
        \midrule
        \tabincell{c}{\archname  $F_{\theta_{\mathrm{I}}}$ acc \\on test set} &83.8\% &83.9\% &83.5\% &83.5\% &83.1\% &83.0\% &82.7\% &82.3\% &82.0\% &81.5\% &81.0\%\\
        \midrule
        \tabincell{c}{best direct \\single-query \\attack against  $F_{\theta_{\mathrm{I}}}$} &50.4\% &50.4\%  &50.6\% &50.5\% &50.5\% &50.3\% &50.7\% &50.7\% &50.3\% &50.7\% &50.6\%\\
        \midrule
        \tabincell{c}{\sysname $F_{\theta_{\mathrm{II}}}$ acc \\on test set}&79.8\% &79.9\% &79.9\% &79.5\% &79.3\% &79.3\% &78.8\% &78.5\% &78.1\% &77.8\% &77.3\%\\
        \midrule
        \tabincell{c}{best direct \\single-query \\attack against  $F_{\theta_{\mathrm{II}}}$}&55.7\% &55.0\% &54.6\% &54.5\% &53.9\% &53.3\% &52.9\% &52.0\% &52.3\% &51.8\% &51.6\%\\
        \bottomrule
    \end{tabular}
    \end{table*}

\begin{table}[ht]
    \caption{\archname and \sysname against direct single-query attack for $K/L$ = 5/2 on Purchase100.}
    \label{fixedKLfull}   
    \centering
    \begin{tabular}{ccccc}
        \toprule
        model&$K$, $L$  &\tabincell{c}{single model\\
        acc on test set}  &\tabincell{c}{acc on \\test set} &\tabincell{c}{best \\attack}\\
        \midrule
        \multirow{3}{*}{\archname}&5, 2 &78.1\% & 80.9 \% &50.5\% \\
        &25, 10 & 77.9\%  &83.0\%& 50.3\%\\
        &50, 20 &77.9\% &83.5\% &50.4\%\\
        \midrule
        \multirow{3}{*}{\sysname} &5, 2 &N/A &77.7\% &55.3\%\\
        &25, 10 &N/A &79.3\% &53.3\%\\
        &50, 20 &N/A &79.2\% &53.2\%\\
        \bottomrule
    \end{tabular}
    \end{table} 
Table~\ref{tab:ablationpurchase} presents the ablation study on Purchase100. Compared with undefended model, \sysname only incurs 2.0\%$\sim$3.9\% loss in (test) classification accuracy. As for MemGuard, though it has the same classification accuracy as undefended model, the best attack accuracy across different architectures are higher than 63.0\%~(MemGuard cannot defend against label-only attacks) while \sysname limits the attack to be no more than 56.2\%. Compared with adversarial regularization, \sysname achieves 0.2\%$\sim$2.9\% higher classification accuracy and reduces the additional attack accuracy over a random guess~(50\%) by a factor of 1.2 $\sim$ 2.

Table~\ref{tab:ablationtexas} presents the ablation study on Texas100.
Note that the classification accuracy of \sysname is only 0.1\% lower than the undefended model for the model which has a architecture of width=1, depth=3 and Tanh as the activation function. For other architectures, \sysname even increases the classification accuracy a little~(0.3\% $\sim$ 4.4\%). As for MemGuard, the best attack accuracy across different architectures is higher than 63.0\%, and as high as 80.7\% for model with width =1, depth=4 and ReLU activation function~(MemGuard cannot defend against label-only attacks), while \sysname limits the attack accuracy to no more than $57.6\%$.
Compared with adversarial regularization,  \sysname achieves higher classification accuracy~(7.0\% $\sim$ 8.7\%) and lower MIA accuracy~(1.4\% $\sim$ 7.5\%).

Table~\ref{tab:ablationcifar} presents the ablation study on CIFAR100. Compared with undefended model, the classification accuracy for \sysname only decreases by 2.4\% $\sim$ 3.2\%. As for MemGuard, the best attack accuracy across different architectures is 69.9\% $\sim$ 73.6\%~, while \sysname limits the attack accuracy no more than $58.3\%$. In comparison with adversarial regularization, \sysname achieves 0.8\% $\sim$ 3.1\% higher classification accuracy and 0.7\% $\sim$ 9.9\% lower MIA accuracy.

\subsection{Ablation Study on K and L}
\label{appendix:KL}
    Here, we discuss the setting of parameters in the \archname and \sysname, i.e., the choice of $K$ and $L$. We first vary $L$ keeping $K$ fixed~(Table~\ref{varyL}) and second vary $K$ and $L$ keeping the ratio of $K/L$ fixed~(Table~\ref{fixedKLfull}). We evaluate these two experiments on Purchase100. We first discuss the performance of \archname. From a privacy perspective, we find that for direct single-query attacks, all settings of $K$ and $L$ in \archname limit the attack accuracy around a random guess. From a utility perspective, when $L$ is smaller with fixed $K$, each sub-model is trained with more data, and the accuracy of a single sub-model on the test set is higher. For overall performance in \archname, which is the average of $L$ outputs, when $L$ is smaller, fewer models are aggregated. For example, for $L=1$, the test accuracy is lower than that of the model trained with whole dataset. When $L$ increases close to $K$, the test accuracy for each single sub-model is low and the overall accuracy will be lower than the test accuracy of the undefended model. Therefore, there is a trade-off in the choice of the parameter $L$.
    
    When the ratio of $K/L$ is fixed, the test accuracy will be lower than the undefended model for small values of $L$ as the ensemble performance of $L$ is poor. As $L$ increases, the test accuracy will increase because of ensemble of $L$ models but such improvement is limited: the test accuracy is nearly same for $K=25$, $L=10$ and $K=50$, $L=20$.

    We next discuss the performance of \sysname. As the final protected model in \sysname learns the knowledge transferred from the \archname, it has also similar performance for $L$ ranging from 5 to 7, and with further increases in $L$, the test accuracy of \sysname also drops. For MIA accuracy, when $K$ is kept fixed, MIA accuracy decreases as L increases. When the ratio of $K/L$ is fixed, for $K=25$, $L=10$ and $K=50$, $L=20$, the MIA accuracy is similar. This is because when $L$ is relative large, it's easier for Self-Distillation to train a model which mimics \archname: when a single sub-model is trained with a appropriate portion of data, it will have a good test accuracy which serves as an enabler for Self-Distillation.

    Keeping computational overhead, utility and privacy in mind, \textbf{we need $L$ to be large enough and a proper proportion of $K$ to ensure that each sub-model is trained with enough data and benefits from averaging while enabling Self-Distillation to mimic the performance of \archname.} However, we should not simply increase $L$ without any limitations while keeping $K/L$ fixed, referring to the computation overhead discussed in Section~\ref{subsec: efficiency}. In our experiments, we use $K=25$, $L = 10$ for all three datasets.

    \section{An Optional Parameter for Soft Labels in Self-Distillation for Trade-off between Utility and Membership Privacy}
    \label{appendix:tradeoff}

    The (test) classification accuracy of  \sysname is 2.0\% $\sim$ 3.9\% lower than undefended model on Purchase100 and 2.4\% $\sim$ 3.2\% lower than undefended model on CIFAR100. We now consider an alternative design choice for soft labels of training set in Self-Distillation: instead of only using output from \archname to optimize for privacy, we now combine the outputs from \archname  and the ground truth label together as soft labels in Self-Distillation. This results in a trade-off which can help achieve higher classification accuracy than \sysname at the cost of  membership privacy:
    \begin{equation}
    \label{eqa:alphaparameter}
        y_{\text{soft}} = (1-\lambda) F_{\theta_{\mathrm{I}}}+\lambda y
    \end{equation}
    Here $\lambda$ is a hyper-parameter ranging from 0 to 1, which controls the ratio of outputs from \archname and ground truth labels for training set. When $\lambda =0$, this is equivalent to \sysname; when $\lambda = 1$, this is equivalent to undefended model.

\begin{figure}[ht]
\centering
\includegraphics[width=3in]{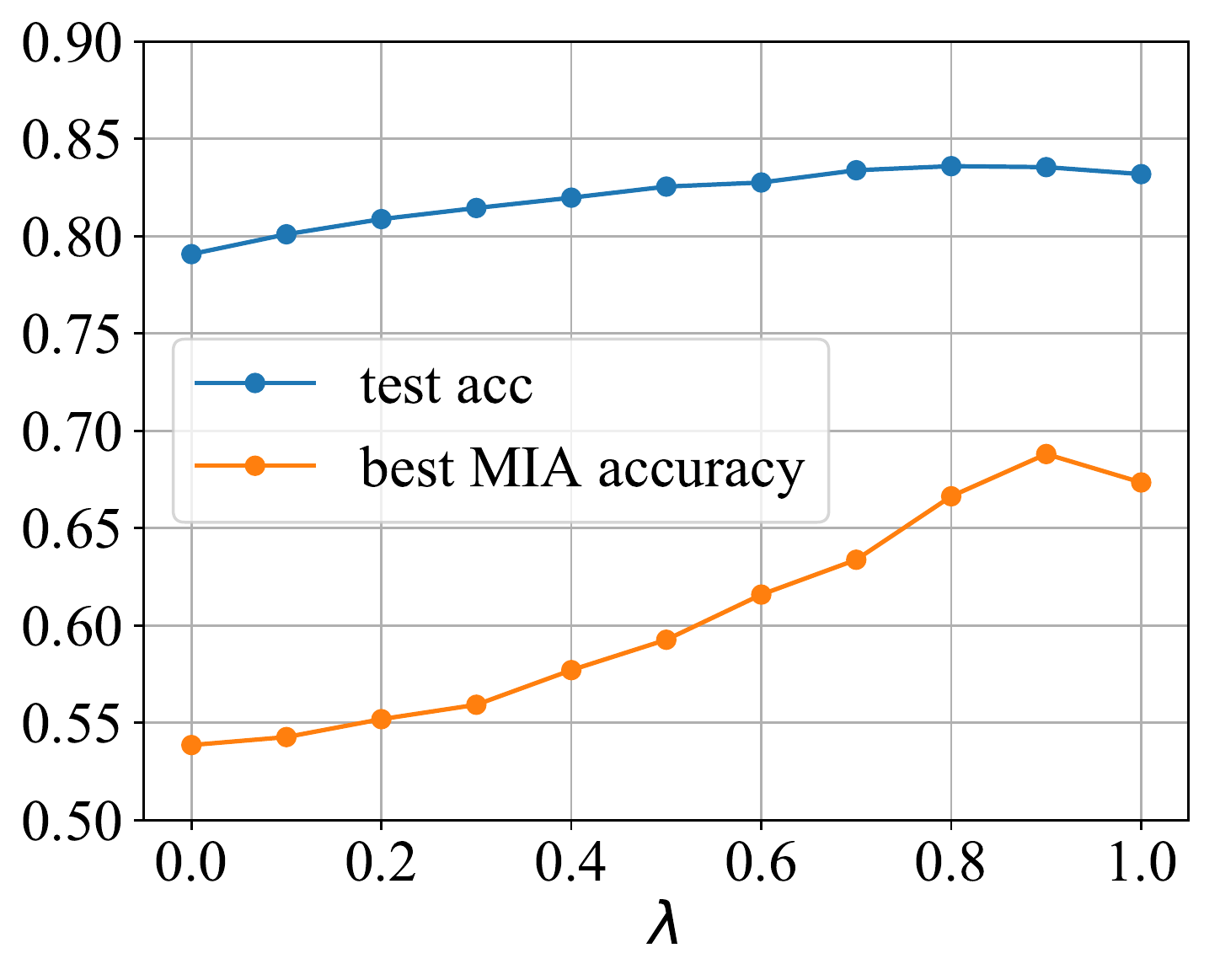}
\caption{The change of classification accuracy with $\lambda$ and change of best MIA classification accuracy with $\lambda$. When $\lambda$ = 0, this
is equivalent to SELENA; when $\lambda$ = 1, this is equivalent to undefended model.}
\label{fig:alphaparamter}
\end{figure}

    \begin{figure*}[ht]
        \centering
        \subfigure[Purchase100]{
        \begin{minipage}[t]{0.3\linewidth}
        \centering
        \includegraphics[width=2.1in]{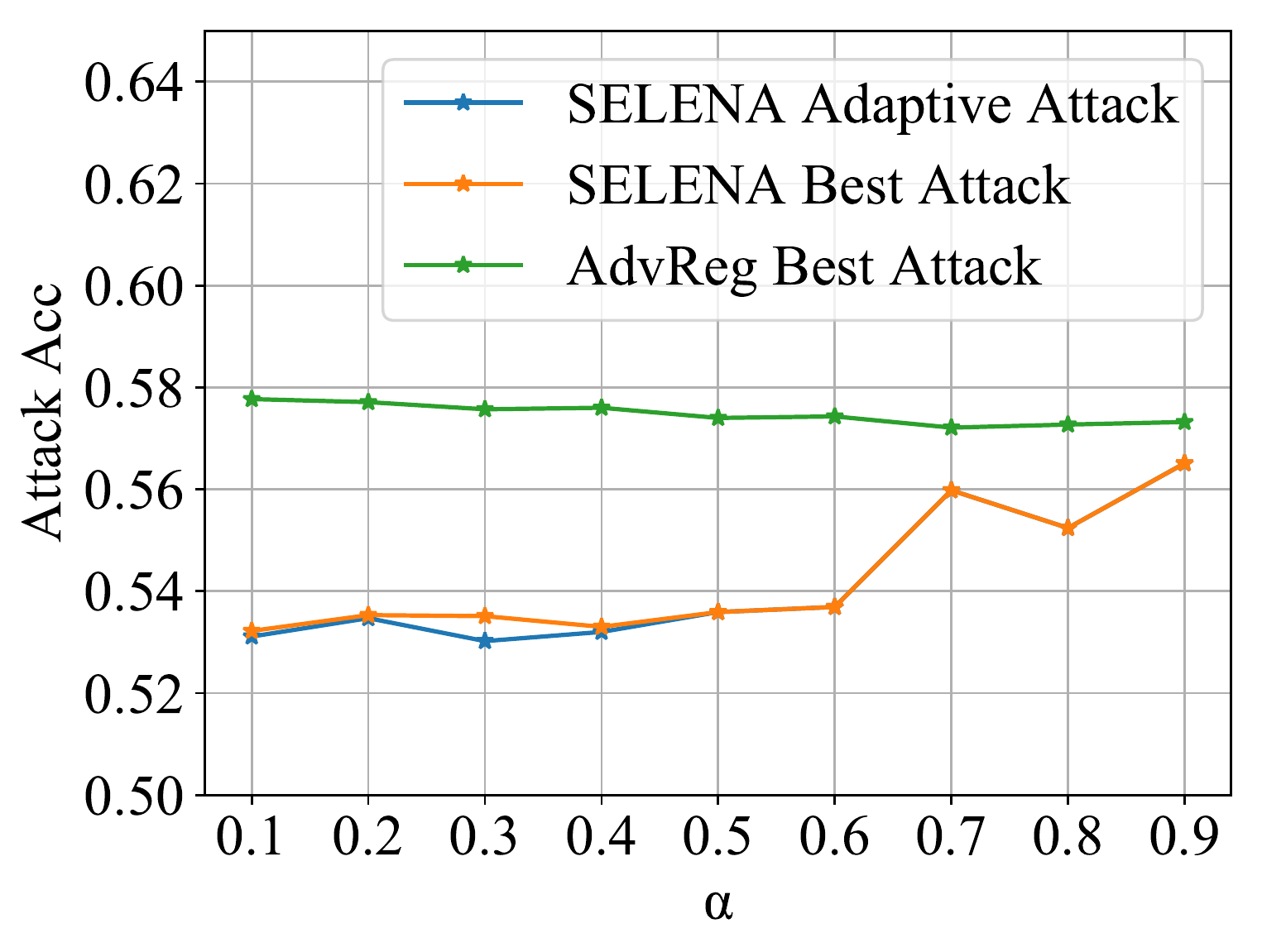}
        \end{minipage}
        }%
        \subfigure[Texas100]{
        \begin{minipage}[t]{0.3\linewidth}
        \centering
        \includegraphics[width=2.1in]{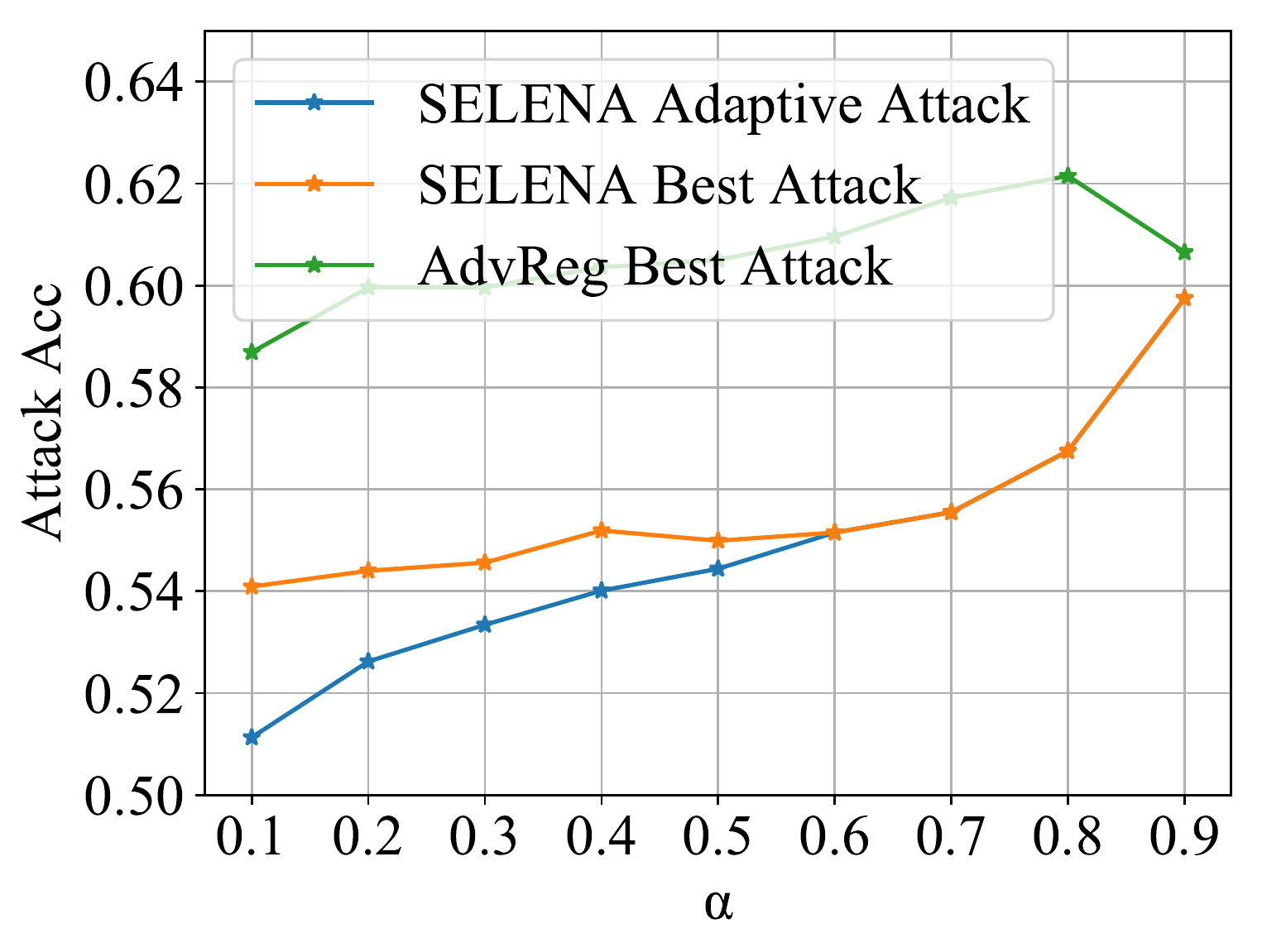}
        \end{minipage}
        }%
        \subfigure[CIFAR100]{
        \begin{minipage}[t]{0.3\linewidth}
        \centering
        \includegraphics[width=2.1in]{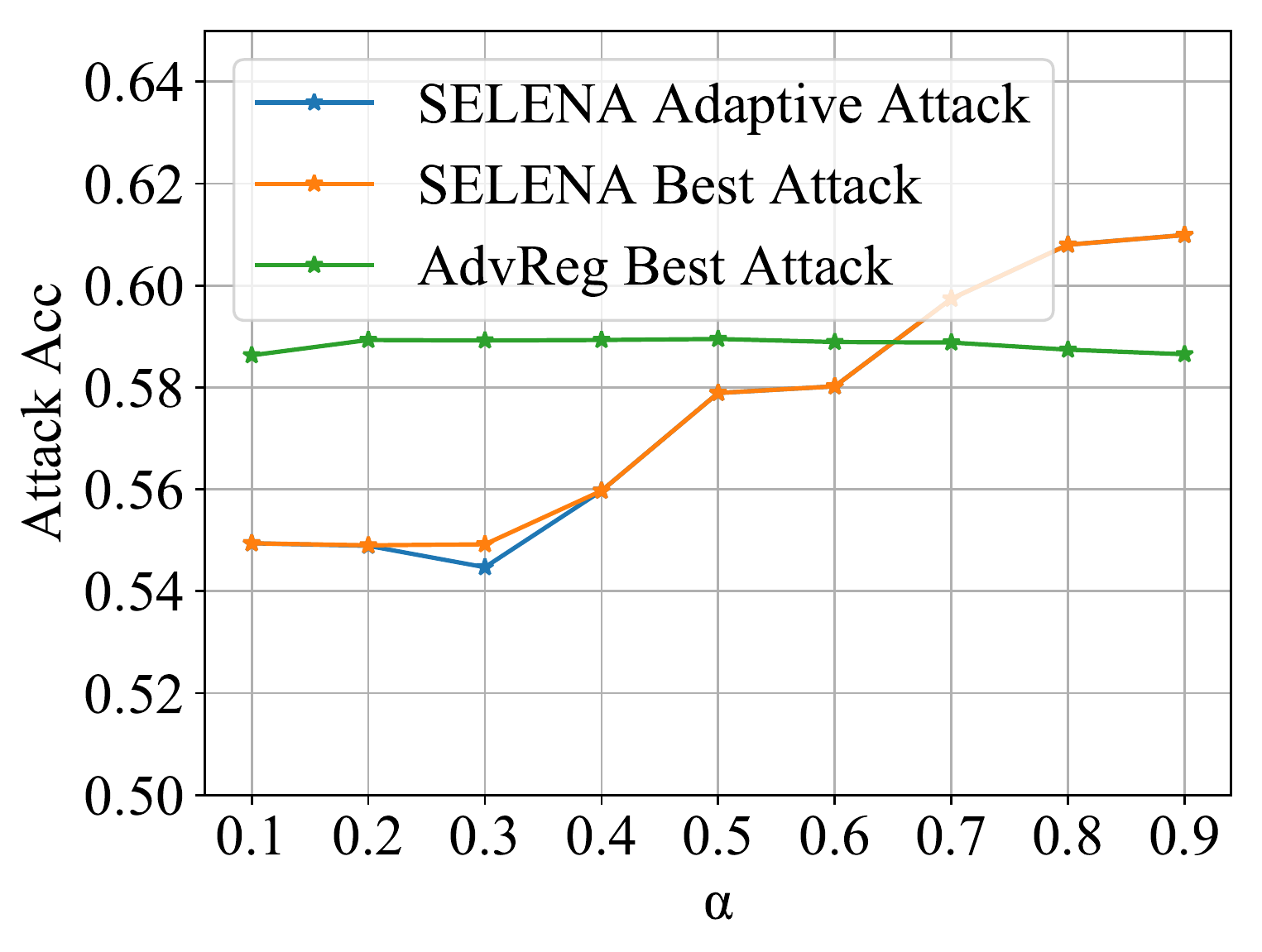}
        \end{minipage}
        }%
        \centering
    \caption{Impact of training set size used by attacker to train shadow \archname on adaptive attacks. $\alpha$ in Figure~\ref{fig:adaptive_attack} is the ratio of training samples used by attacker to train shadow \archname.}
    \label{fig:adaptive_attack}
    \end{figure*}

We evaluate this alternative soft labels approach on Purchase100 and present the classification accuracy as well as the best MIA attack accuracy (across multiple attack types) as a function of training epoch in Figure~\ref{fig:alphaparamter}. The best attack accuracy is the highest accuracy among direct single-query attack, label-only attacks and adaptive attacks~(attacker will estimate soft labels according to Equation~(\ref{eqa:alphaparameter})). We can now better control the trade-off between utility and membership privacy: if the desired MIA accuracy is no more than 56\%~(which is still lower than adversarial regularization, recall that the best MIA accuracy against adversarial regularization on Purchase100 is 57.4\%), then $\lambda = 0, 0.1, 0.2$ all satisfy the requirement. We observe that $\lambda=0.1$ increases classification accuracy by about 1\% and $\lambda=0.2$ increases classification accuracy by about 2\%  compared to \sysname.

\section{Detailed Analysis of Adaptive Attacks}
\label{appen:adaptiveattack}

We note that the number of training samples known by attacker will impact the quality of outputs generated by attacker's shadow \archname. For example, if the attacker knows all member samples, though not practical for the membership inference attack problem,  the confidence of shadow \archname's output is similar to that of defender's \archname's output. In contrast, knowing only part of member samples will lead to the result that the confidence of shadow \archname's output is lower than that of defender \archname's output.

To understand adaptive attacks thoroughly, we vary the number of member samples known by the attacker. Attacker will use all these member samples to train shadow \archname and the attacker goal is still to identify remaining unknown member samples and the baseline random guess is $50\%$ under the setting that the number of members and non-members used to train and evaluate the
attack model are the same.

Figure~\ref{fig:adaptive_attack} presents the performance of adaptive attacks as a function of size of training samples used by attacker to train shadow \archname including the best attack accuracy across multiple attacks~(direct single-query/label-only/adaptive attack for \sysname and direct single-query/label-only attack for adversarial regularization) as well as the adaptive attack for \sysname. $\alpha$ in Figure~\ref{fig:adaptive_attack} is the ratio of training samples used by attacker to train shadow \archname. Figure~\ref{fig:adaptive_attack} shows that for all three datasets, the adaptive attack accuracy increases as the increasing number of training samples used to train attacker's shadow \archname. Specifically, for Texas100 dataset, when $\alpha \leq 0.5$, the adaptive attack is lower than other two MIAs. Our \sysname performs well across different $\alpha$ settings: for Purchase100 and Texas100, the adaptive attack accuracy is lower than best MIA attack against adversarial regularization for all $\alpha$s. For CIFAR100, we can see that the adaptive attack accuracy against \sysname is lower than the best attack accuracy for CIFAR100 against adversarial regularization for $\alpha\leq 0.6$ and slightly higher (around 2\% at $\alpha=0.9$) than adversarial regularization for $\alpha\geq 0.7$.

\section{Comparison with Model Stacking}
\label{appendix:modelstacking}

    \begin{table}[ht]
    \caption{Comparison of \sysname and Model Stacking~(MS) against direct single-query attack.}
    \label{tab:modelstacking}
    \begin{tabular}{ccccc}
        \toprule
        dataset&defense & \tabincell{c}{acc on \\training \\set} & \tabincell{c}{acc on \\test set}& \tabincell{c}{best \\attack}\\
        \midrule
        \multirow{3}{*}{Purchase100} &\sysname &82.7\% & 79.3\% &53.3\%\\
        &\tabincell{c}{MS} &84.3\% &73.6\% &62.4\%\\
        \midrule
        \multirow{3}{*}{Texas100} &\sysname &58.8\% &52.6\% &54.8\%\\
        &\tabincell{c}{MS}&60.5\% &46.3\% &63.8\%\\
        \midrule
        \multirow{3}{*}{CIFAR100} &\sysname &78.1\% &74.6\% &55.1\%\\
        &\tabincell{c}{MS} &80.6\% &66.5\% &63.2\%\\
        \bottomrule
    \end{tabular}
    \end{table}

Table~\ref{tab:modelstacking} presents the comparison between our defense and model stacking against direct single-query attack. The classification accuracy of model stacking is lower than \sysname: 5.7\% lower on Purchase100, 6.3\% lower on Texas100 and 8.1\% lower on CIFAR100, while the direct single-query MIA accuracy against model stacking is higher than that against our defense: 
9.1\% higher on Purchase100, 9.0\% higher on Texas100 and 8.1\% higher on CIFAR100. This experiment supports our statements in Section~\ref{subsec:modelstacking}:~(1) Model stacking suffers a drop in test accuracy as it requires a disjoint subset of data for each module.~(2) Disjoint dataset partition alone is not enough to protect membership privacy if the outputs in the first layer are directly combined as inputs to the second layer. 

\end{document}